\newtheorem{definition}{Definition}
\newtheorem{proposition}{Proposition}
\newtheorem{theorem}{Theorem}
\newtheorem{lemma}{Lemma}
\newtheorem{corollary}{Corollary}
\newcommand{\commentout}[1]{}
\title{Bounding the Communication Complexity of Fault-Tolerant Common Coin Tossing}
\author{Ivan Geffner \and Joseph Y. Halpern\thanks{Supported in part by NSF grants IIS-1703846 and 
IIS-0911036, ARO grant W911NF-17-1-0592,  MURI grant W911NF-19-1-0217
from the ARO, and a grant from Open Philanthropy.}}
\date{}
\begin{document}

\maketitle

\begin{abstract}
Protocols for tossing a common coin play a key role in the vast majority of implementations of consensus. Even though the common coins in the literature are usually \emph{fair} (they have equal chance of landing heads or tails), we focus on the problem of implementing a \emph{biased} common coin such that the probability of landing heads is $p \in [0,1]$. Even though biased common coins can be implemented using fair common coins, we show that this can require significant inter-party communication.  In fact, we show that there is no bound on the number of messages needed to generate a common coin of bias $p$ in a way that tolerates even one malicious agent, even if we restrict $p$ to an arbitrary infinite subset of $[0,1]$ (e.g., rational numbers of the form $1/2^n$) and assume that the system is synchronous. By way of contrast, if we do not require the protocol to tolerate a faulty agent, we can do this.  Thus, the cause of the message complexity is the requirement of fault tolerance. 
\end{abstract}

%ivan5
\section{Introduction}

%ivan6: rewritten
The idea of a common coin toss was first introduced by
%joe7: using reference in z.bib
%Rabin~\cite{Rabin} as a  core component of his solution to the
Rabin~\cite{Rab} as a  core component of his solution to the
%joe6: you keep saying ``on'' when you mean ``in''
%Byzantine Agreement (or Byzantine Generals) problem on
%joe7
%Byzantine Agreement (or Byzantine Generals) problem in
Byzantine Agreement problem in
\emph{asynchronous systems} (in which messages can be arbitrarily
delayed). In Byzantine Agreement, each of $n$ agents starts with an
input $x_i$, which is assumed to be $i$'s preference. By the end of
the protocol, all honest agents must output the same value $v$, with
the additional constraint that if all honest agents originally had the
same preference $x$, then they must output $x$. Rabin's implementation
%joe8
%assumes that the common coin outcome is given to all agents by a
assumes that the common coin is given to all agents by a
%joe6*: Were Aspnes and Herlihy really the first to do this?  Didn't
%Ben-Or also have a common coin algorithm that didn't not require
%external help?  I think that Aspens and Herlihy may have given the
%first poly-time algorithm for a common coin.  You have to relate your
%work to this literature.  This also makes it clear that your work is
%not an isolated little result.  See, for example
%http://www.cs.yale.edu/homes/aspnes/pinewiki/RandomizedConsensus.html.  
%polynomial time algroithm. 
%ivan7: I think Ben-Or doesn't use a 'common coin', or at least not
%explicitely. I mean, from what I see each agent randomizes its input
%if they can't agree in their first attempt and eventually they all
%randomize the same way, it is sort of a 'common coin' in the big
%picture. I guess it is good to add it. 
%joe7: people understand that Ben-Or implemented a common coin
trusted external entity, but later 
%ivan7: 
%joe7: using reference in z.bib
%Ben-Or~\cite{Ben83} provided a \emph{fault-tolerant} implementation
Ben-Or~\cite{BenOr} provided a fault-tolerant implementation of a
common coin toss 
%joe7
%that did not require any external help and in which agents implicitly
%compute a common coin toss, where \emph{fault tolerant} means that
that did not require external help, 
where \emph{fault tolerant} means that
honest agents still manage to satisfy the problem requirements (in
%joe7
%this case, it is terminating with probability 1, agreeing on the same
%value, and that this value is their preference if all honest agents
%have the same one) even though though some of the agents are
%\emph{faulty} or \emph{malicious} and deviate from the
this case, terminating with probability 1, agreeing on the same
value, with that value being the preference of the honest agents if
all honest agents 
have the same preference) even if the faulty agents 
deviate in arbitrary ways from the
protocol. Ben-Or's protocol for $n$ agents tolerates up to $n/2$
\emph{fail-stop} faults (where agents can deviate by stop sending
%joe7
%messages after some point), and up to $n/5$ byzantine faults (where
messages after some point), and up to $n/5$ Byzantine faults (where
agents can deviate in any way), and the expected number of messages
%joe7
%was exponential in $n$. A few years later, Bracha~\cite{Bracha87}
is exponential in $n$. Bracha~\cite{Bracha87}
improved Ben-Or's solution by implementing a more efficient common
%joe7
%coin, and provided a protocol that tolerates up to $n/3$ byzantine
%faults, which was already proven to be
coin; he provided a protocol that tolerates up to $n/3$ Byzantine
faults, which was known to be
optimal~\cite{BrachaT83}. 
%ivan10*: this assumes shared memory, I don't know if there is a better citation
%\commentout{
%ivan11: replaced by what you have in your consensus paper
\commentout{
%joe1: I still think we should reference it.  You may want to do a
%google search for other solutions.
Bracha's solution still required an
%joe11
%exponential number of messages in expectation, but Aspnes and 
exponential number of messages in expectation.  Aspnes and 
Herlihy~\cite{AspnesH1990consensus} provided 
a polynomial time fault-tolerant
%joe7
%common coin that tolerated up to $n/3$ faults.
%joe11
%common coin that tolerates up to $n/3$ faults.
%}
common coin that tolerates up to $n/3$ faults, but their approach used
shared memory.
}
%joe12
%Bracha's solution still required an
Bracha's solution still requires an
%joe11
%exponential number of messages in expectation, but Aspnes and 
exponential number of messages in expectation, but Feldman and
Micali~\cite{FM88} provided a constant expected time fault-tolerant
common coin that tolerates up to $n/3$ faults. 
%ivan7: out
\commentout{
 Bracha 
Aspnes and
%joe6: can you send me the reference if we use it?  
%Herlihy~\cite{AspnesH1990consensus} came up with a
%\emph{fault-tolerant} common coin implementation without any external
%help, where \emph{fault-tolerant} means that honest agents still
%joe6: can you send me the reference if we use it?
Herlihy~\cite{AspnesH1990consensus} provided
\emph{fault-tolerant} common-coin implementation that did not require
%ivan7: I think this is important
external
help, where \emph{fault-tolerant} means that honest agents still
manage to agree on a random common bit even though some of the agents
deviate from the protocol. 
}
Ever since, nearly all implementations of
%joe8
%Byzantine Agreement rely on the implementation of common coins.
Byzantine Agreement rely on common coins. 
%joe6*: ran on paragraph.  What other examples are there?  You need to
%give some references here; you can't just say this
%Aside from Byzantine Agreement, which is the most important example,
%common coin tosses are essential in many other protocols that require
%coordinated randomization.
%ivan7: actually I'd take this out since it is stated for biased coins right below, it seems redundant.
\commentout{
Being able to toss a common coin also plays a significant role in
protocols other than Byzantine agreement and consensus, such as ...
}

%joe6*: added paragraph break here.  You should also confirm that in
%the aplications where the common coin is used, they often need a
%biased coin.
%In the literature, usually the common coin
%implementation is meant to be \emph{fair}, which means that the
%implemented coin has equal probability of landing heads (1) and tails
%(0). The main reason behind this is that, as shown later in this
%joe8
%The common coin implemented by the protocols is typically a fair coin,
The common coin that is implemented is
typically a fair coin, 
%joe7
%that has an equal likelihood of landing heads and tails.  In
which has an equal likelihood of landing heads and tails.  In
%joe7*: Do all proof of stake protocols use weighted leader election?
%Am I right that the probability of an agent being elected is
%proportional to the stake?  In any case, a reference is needed here
%protocols, we often need a biased coin
%%ivan7:
%as for example in a weighted random leader election (as in \emph{proof
%  of stake} blockchain protocols), in a distributed lottery, or more
%in general any distributed protocol that requires uneven global
%randomization.  Do you have an example of a paper that uses
%distributed lotteries?  If so, that should be cited.
%ivan8: these protocols never talk explicitly about it, but they are
%equivalent problems and thus implcitly implement a common coin. 
many protocols, a biased coin is needed.  For example, in \emph{proof
  of stake} protocols, we want the probability of
an agent being elected leader to be proportional to the agent's stake
%ivan8:
%\cite{??};
\cite{NK18};
 the same is true in distributed lotteries
 %ivan8:
 \cite{GrumbachR17}.
%joe6*: we need to stress that we require fault tolerance.
%section, any $p$-\emph{biased} coin, which is a coin that has
%probability $p$ of landing heads, can be implemented by tossing
%\emph{fair} coins repeatedly. However, as we show in this paper, this
%is at the expense of a much larger number of messages. In fact, we
%show that even in systems with pairwise reliable, private and
Of course, a biased coin can be implemented by tossing a fair coin
repeatedly, but, as we show in the main result of this paper, doing so
will typically require agents to send many messages if we want the
%joe7
%common coin to be \emph{fault tolerant}, that is, if we require that,
common coin to be fault tolerant in the sense that, 
no matter what malicious agents do, honest agents will agree on the
outcome of the coin toss.  
In fact, we
show that even in systems with pairwise reliable, private, and
authenticated synchronous channels and with an additional broadcast
channel, there exist values of $p \in [0,1]$ such that the amount of
rounds of communication required to implement a fault-tolerant
%joe7: ``Even more'' is bad English in this context
%$p$-biased coin is arbitrarily high. Even more, we show that this
$p$-biased coin is arbitrarily high. Indeed, we show that this
property holds even if we restrict the possible values of $p$ to any
%joe7
%infinite subset $S \subseteq [0,1]$. This means that, for instance, if
infinite subset $S \subseteq [0,1]$ and want to tolerate only one
malicious player.
This means that, for instance, if
we only consider rational values of $p$, or only values of the form
%joe6
%$1/2^k$, some of those values require an arbitrarily high number of
$1/2^k$, there is no bound on the number of messages required to
%joe7
%generate a $p$-biased coin.
generate a common $p$-biased coin.
%joe7*: there's a wole area called communication complexity that
%studies problems like this
Put another way, the communication complexity of a protocol that takes
as input a number $p$ in some infinite subset $S$ of $[0,1]$ and results
in a (fault-tolerant) common $p$-biased 
coin must have unbounded 
%ivan10
worst-case
communication complexity for all choices of $S$.
%ivan10: added
If we consider the expected communication complexity instead of the worst case, we show in Section~\ref{sec:def} that there exists a family of protocols
that generate common $p$-biased coins
%ivan10:
with probability $1$ for all $p \in [0,1]$
%joe11
%and have expected message complexity of 2, but do not have a uniformly
and have an expected message complexity of 2, but do not have a uniformly
bounded message complexity. 
%ivan10: added this, our result is similar in flavor to FLP
This means that the problem of implementing $p$-biased coins 
%joe11
%is somewhat similar to that of implementing distributed consensus, in the
has the same lower-bound behavior as that of distributed consensus, in the
sense that there is no protocol that implements the desired
functionality while tolerating a single fault and guaranteeing
%joe11
%termination (\cite{FLP} in the case of distributed consensus), but
termination, but
there exist such protocols that terminate with probability 1
%joe11
(see \cite{ADH08,FLP} for the results in the case of consensus).

It may seem that the problem is due to the ``complexity'' of
$p$.  Perhaps it requires $O(k)$ messages to generate a coin
of bias $1/2^k$.  However, as we show in 
%ivan10
%see 
%ivan7:
%Section~\ref{...},
Section~\ref{sec:def}, 
 the
 difficulty arises from the requirement of fault tolerance.
%joe8*
%ivan10: I would take this out
\commentout{
 and that the bound holds in all executions of the protocol (as
 opposed to just holding with high probability).
}
 There is a
simple protocol that requires only $n$ messages for $n$ agents to generate a
$p$-biased common coin if we do not require fault tolerance for all $p
\in [0,1]$.  (Generating this coin might require some local
%joe7
%computation, but very
computation, but agents need to send very 
%ivan7
few
%joe7
%messages are needed.)  Moreover, as we also
%joe8
%messages.)  Moreover, as we also
messages.) 
%ivan10: moved
\commentout{
 Moreover, we also provide a trivial family of protocols
%show, if we do not care about message complexity, for all $p \in
%[0,1]$ there is a trivial protocol $\pi^p$ with 
%ivan7:
%bounded 
%message
%complexity
%joe8
%that generates a common $p$-biased coin,
that generate  common $p$-biased coin for all $p \in [0,1]$
%joe7
%joe8*
%but these protocols do not have a uniformly bounded message complexity.
that do not have a uniformly bounded message complexity,
but all have expected message complexity of 2.
}

%ivan11: this was originally why we stumbled upon this problem to begin with.
%joe14
%This result has also consequences in Algorithmic Game Theory. Given a
%game $\Gamma$, implementing solution concepts such as Nash or
%correlated equilibria with cheap talk has been a central subject of study
%ivan11: probably I'm missing citations:
%(see \cite{AH03,Bp03,Gerardi04,GM05,ADGH06,ADGH19}). Roughly speaking,
These results turn out to be relevant to work in \emph{implementing
  mediators} using cheap talk (i.e., 
getting the same outcome that can be obtained using a trusted mediator
by just having the agents communicate with each other) 
%ivan11: probably I'm missing citations:
%joe14: yes, there are many, many more papers (Francois Forges in
%perhaps one to add).  I'm not sure I wold call this algorithmic game
%theory.  All the papers except the ADGH papers are by economists.  I
%took out ours, since it seems a little self serving to reference that
%when there's so much other stuff.
%(see, e.g., \cite{AH03,Bp03,Gerardi04,GM05,ADGH06,ADGH19}). Roughly speaking,
%cheap talk extensions of $\Gamma$ consist of two phases, one in which
%players are allowed to freely communicate, and then another in which
%they must make a move in $\Gamma$. In the literature it is somewhat
%assumed that it is common knowledge when the communication phase ends,
%and thus that (a) every player makes a move in $\Gamma$
(see, e.g., \cite{AH03,Bp03,F90,Gerardi04,GM05,ADGH06, ADH07, abraham2019implementing, geffner2021security, geffner2023lower, geffner2023communication}).  
In these
implementations, roughly speaking, players first talk to each other
%(this is called the \emph{communication phase})
and then at some point must make a move in an underlying game $\Gamma$.
In the literature, it is implicitly
assumed that players move simultaneously.  
%joe14: readers won't  know what a punishment strategy is.
%simultaneously, and (b) players cannot communicate after they play an
%action. Knowing when the communication phase ends is of critical
%importance in certain situations, for example when there exists a
%punishment strategy (\cite{Bp03,ADGH06,ADGH19}) since players do not
%know if they have to punish or not until the very end of the game. The
%end of the communication phase being common knowledge is indeed a
%natural assumption if the communication protocol $\vec{\sigma}$ is
%\emph{uniformly finite} (i.e. it always terminates in at most $N$
%action.
It is typically straightforward to assure this,
since the implementations almost always run in a bounded
number of rounds; that is, there is some $N$ such that, all executions
of the algorithm terminate in at most $N$ rounds.  Thus, players can
just wait until round $N$, and then make a move in the underlying game.
%The end of the communication phase being common knowledge is indeed a
%natural assumption if the communication protocol $\vec{\sigma}$ is
%%\emph{uniformly finite} (i.e., for some $N$, it always terminates in
%%at most $N$ rounds), since players can just wait until the
%last round to play their action. However, if $\vec{\sigma}$ is not
%uniformly finite, the fact that players know when all other players
%terminate places an additional assumption on the protocol settings.
%joe14*: Ivan, I'm totally lost.  what do the results of this paper 
The move made is typically drawn according to some distribution (e.g.,
the distribution determined by the Nash equilibrium of the game); \emph{a priori}, 
%joe14
%that involve an infinite number of different probabilities have an
the algorithm must be able to deal with 
an infinite number of possible probabilities.  As the results of this
paper show, this means that the implementation cannot run in a bounded
number of rounds.
%joe14*
But if the algorithms are unbounded,
it becomes nontrivial for players to coordinate on a commonly-agreed
%ivan12: time?
%upon tiee 
time
to make their moves.  
%ivan12: Is this actually true? I'd cut it since we have already
%mentioned the implicit assumption (simultaneous moves).
%%joe15: If not, how are they impementing the coin tosses?  In any
%%case, if this isn't true, why are we even talking about all this work?
%I don't see where you talk about this earlier, and the paragarph
%doesn't flow if you cut it, so I reinstated it.  If you point out
%where we discussed this issue, then I'd still keep the sentence, but
%I would rewrite it a bit.
%\commentout{
The papers on implementing mediators
get around this problem by implicitly assuming that players have 
access to arbitrarily biased coins
%ivan13: I think Gerardi and some others do this instead
%joe16
%or that players have no computational constraints (e.g. that players can
or that players have no computational constraints (i.e., that players can
operate with arbitrary real numbers and encode them in their messages).
%}
Thus, the implicit assumption
made in these papers is actually hiding a lot of complexity.
%This includes, for example, correlated
%equilibria in games with an infinite number of actions, or solution
%concepts involving the simulation of a trusted mediator with
%non-finite randomization. 

%joe6: ``In'' not ``over'' 
%Over the next section we formalize all these notions introduced in
%this section and introduce our main result.  
In the next section we formalize all the notions introduced in
this section, give the simple protocols mentioned 
%ivan10* now "above" also refers to two paragraphs above
above, and state 
our main result.
%joe6*: Added; this is important
The details of the proof are given in 
%ivan7:
%Section~\ref{...}.
Section~\ref{sec:proof}.
Perhaps surprisingly, the key technical result that we need for the proof is
a nontrivial theorem from algebraic geometry that says that 
%ivan7: this is not true (it can have infinite critical points with
%the same image), also it doesn't need to be multilinear, we need the
%multilinear property this for our finiteness argument. 
%a multilinear polynomial can have only finitely many  
the set of \emph{critical points} of a 
%joe7: need to explain image
%polynomial in several variables has a finite image, where $x$ is a
polynomial $Q$ in several variables has a finite image; that is,
$\{Q(x): x \mbox{ is a critical point of $Q$}\}$ is finite,  where $x$ is a
critical point of $Q$  if all partial derivatives of $Q$ at $x$ are 0.

%ivan6: out
\commentout{
Consensus is arguably the most important problem in distributed
computing. Initially each agent $i$ starts with an input $x_i$, which
%joe3
%is assumed to be $i$'s preference. By the end of the protocol all
is assumed to be $i$'s preference. By the end of the protocol, all
honest agents must output the same value $v$, with the additional
constraint that if all honest agents originally had the same
preference $x$, then they must output $x$. Bracha~\cite{Br84} provided
a protocol that implements consensus with probability 1 in both
synchronous and asynchronous systems, and that tolerates deviations of
up to $t$ malicious agents as long as the total number of agents $n$
%joe3
%satisfies $n > 3t$. Even though Bracha's implementation is proven to be
satisfies $n > 3t$. Even though Bracha's implementation is 
optimal in the sense that there is no implementation for consensus
if $n \le 3t$, novel protocols are still being developed in order to
%joe3
%deal with additional assumptions. One of the most famous ones is
deal with additional assumptions. One of the best known is
Nakamoto's blockchain protocol~\cite{nakamoto2008bitcoin}, where
%joe3
%agents can join and leave at any moment and that is resilient to
%\emph{sybiline attacks} (attacks in which a single user may join the 
the set of agents is not fixed; 
agents can join and leave the group at any time and the protocol is resilient to
\emph{sybil attacks} (attacks in which a single user may join the 
protocol several times under different identities).  

We focus on the problem of $p$-consensus (which we also call
\emph{common coin tossing}). In this setting, agents have no initial
preference, but they want  to generate a common random bit $b$ such
that $b = 1$ with probability $p$.
We show that there is no protocol
that can implement $p$-consensus for all $p \in [0,1]$, that tolerates
%joe1
even
a single malicious agent, and such that the number of messages sent by
honest agents in all possible histories is bounded by a number
%joe3*: I think that people will react again the use of real numbers
%here.  Could you get the same result if you restricted to rational numbers?
$N$. This means that there are real numbers $p \in (0,1)$ such that
the number of messages required to implement a common coin toss with
%joe3*: This is out of order.  You have to first *motivate* the common
%coin problem as being a critical component of certain distributed
%protocols.  
probability $p$ are arbitrarily high, which implies that certain kind
of distributed protocols that require coordinated randomization cannot
be implemented with a uniform bound on the number of messages sent. 

In our model, each pair of agents can communicate through private,
authenticated, synchronous channels, in addition to a synchronous
authenticated broadcast channel.
As usual in synchronous systems, we assume that all
agents have access to a common clock, and all messages sent during one
of the clock ticks are guaranteed to be received by the recipient (or
all agents in the case of the broadcast channel) by the next
tick. This model is meant to be as general as possible, since our main
claim is an impossibility result, it holds on asynchronous systems or
without the broadcast channel as well. 
}

%ivan5:
\section{Basic Definitions and Results}\label{sec:def}

%joe3
%In this section we provide the main definitions and results. We begin
In this section, we provide the main definitions and state the main
results. We begin 
%ivan7:
%of
with
 a formalization of 
%ivan8: 
% $p$-consensus:
%joe8
 % implementing
 common $p$-biased coins: 

\begin{definition}
%ivan8:
%[$p$-consensus]
Given a subset $T$ of malicious agents and a strategy $\vec{\tau}_T$
for agents in $T$, let $\vec{\pi}(T, \vec{\tau}_T)$ be the output
distribution of protocol $\vec{\pi}$ when players in $T$ play
$\vec{\tau}_T$ instead of $\vec{\pi}_T$. A joint protocol $\vec{\pi}$
%joe3
%is a $t$-resilient implementation of $p$-consensus for $n$ agents if
is a \emph{$t$-resilient implementation of
%ivan8:
%$p$-consensus
a 
  common $p$-biased coin 
 for $n$
  agents} if
  %joe3:
%  for all subsets $T$ of malicious players with $|T| \le t$ and all
  in all systems of $n$ agents,
for all subsets $T$ of malicious players with $|T| \le t$, and all
strategies $\vec{\tau}_T$, 
%joe3*: there are many problems here: First of all, the output is a
%tuple, not a bitstring.  Second, why should there be *exactly* n-|T|
%ones.   If that's what you require, the players in T can easily
%deviate by outputtng 
%$\vec{\pi}_{-T}(T, \vec{\tau}_T)$ is a
%random variable that takes value $1^{n - |T|}$ (which is $n - |T|$
%ones) with probability $p$ and $0^{n - |T|}$ with probability $1-p$.
%ivan7*: this is not the correct definition, since honest players may
%disagree even if there are n-|T| ones. Also it doesn't say what
%happens with probability 1-p. In the previous definition we were
%looking at \pi_{-T} (which is the output of players not in T,
%malicious players cannot break the definition by outputting something
%else. 
\commentout{
The probability according to $\vec{\pi}(T, \vec{\tau}_T)$ that there
are at least $n-|T|$ 1s in the output is $p$.
}
the random variable $\vec{\pi}_{-T}(T, \vec{\tau}_T)$ (which is the
output  of $\vec{\pi}(T, \vec{\tau}_T)$ restricted to players not in
$T$) is of the form $(1, 1, \ldots, 1)$ with probability $p$ and
$(0,0, \ldots, 0)$ with probability $1-p$. 
\end{definition}

%joe8
In short, a protocol $\vec{\pi}$ is a $t$-resilient implementation of a
%joe3
%$p$-consensus if regardless of what a coalition of up to $t$ malicious
%ivan8:
%$p$-consensus regardless, if of what a coalition of up to $t$ malicious
common $p$-biased coin if, regardless of what a coalition of up to $t$
malicious  
players 
%ivan10:
%do,
does,
 honest players still manage to agree on a random bit that
takes value $1$ with probability $p$ and $0$ with probability
$1-p$. Our main result is that no family of $1$-resilient protocols
%joe8
%that implements
that implement 
%ivan8:
%$p$-consensus
$p$-biased coins
 for several $p$ values in $[0,1]$ can be
implemented with a uniform bound $N$ on the number of messages sent by
honest agents. In order to simplify notation, given a protocol
$\vec{\pi}$ we define an \emph{adversary} to be a pair of the form
$(T, \vec{\tau}_T)$, where $T$ is the subset of malicious players and
$\vec{\tau}_T$ is the strategy they play (instead of
$\vec{\pi}_T$). We also define $\vec{\pi}(A)$ as the output
distribution of running $\vec{\pi}$ with adversary $A$. 

%joe2: we should talk about joint protocols (or protocol profiles), to
%distinguish from protocols for individual agents
%A family of protocols $\{\vec{\pi}^p\}_{p \in \mathbb{R} \cap [0,1]}$ for $n$
%players implements $p$-Consensus if at all runs of $\vec{\pi}^p$ either
%all honest players output 0, or either all honest players output 1, and
%players implements $p$-Consensus if in all runs of $\vec{\pi}^p$ either
%all honest players output 0, or either all honest players output 1, and
%the probability that all such players output $0$ is $p$. This
%implementation is said to be $t$-resilient if it holds these properties
%even in the presence of coalitions of at most $t$ malicious players.
%ivan5:
\commentout{
Joint protocol $\vec{\pi}^p$ \emph{implements $p$-consensus} 
if, in all runs of $\vec{\pi}^p$, either
all honest players output 0 or all honest players output 1, and
the probability that all such players output $0$ is $p$; $\vec{\pi}^p$
is \emph{$t$-resilient} if these properties hold
even if there are up to $t$ malicious players.
}

%ivan5:
\commentout{
%ivan1: added this
In our model, we allow players to send arbitrarily long messages in
%joe1
%$\{0,1\}^*$. However, the number of bits of each of these messages
$\{0,1\}^*$. However, each message must be finite, and thus 
contains only finitely many bits. 
%joe1*: is this a bound for all histories of the protocol, or just in
%expectation? 
%ivan4: all histories must be bounded by b 
We say that a set $P$ of joint protocols is \emph{uniformly bit-bounded} if
there is a bound $N$ such that all
%ivan4: I guess this is missing
%joe2
%honest players in all  protocols $\vec{\pi} \in P$ send at most $b$ bits.
honest players send at most $b$ bits in all histories of all joint protocols
$\vec{\pi} \in P$.
}

%ivan5:
%joe3
%We say that a set $P$ of joint protocols is \emph{uniformly bit-bounded} if
%joe9
%\begin{definition} A set $P$ of joint protocols is \emph{uniformly
\begin{definition} A set $P$ of joint protocols is \emph{$N$-uniformly
    bit-bounded} if 
%joe19
  %there is a bound $N$ such that all
  all 
honest players send at most $N$ bits in all histories of all joint protocols
%joe1
%ivan7:
$\vec{\pi} \in P$.
%joe9: added
$P$ is \emph{uniformly bit-bounded} if 
%ivan10
%if
%joe11
%it $N$-bit-bounded for some $N$.
it is $N$-bit-bounded for some $N$.
\end{definition}

%joe1
%$\vec{\pi} \in P$. Our main result is proving that there is no uniform
%ivan7: this goes inside the definition environment
%$\vec{\pi} \in P$.
%joe7: we already said this a few sentences ago
%Our main result shows that there is no uniform
%upper bound on the number of messages required to implement common
%joe1
%coin tossing with different probabilities while tolerating a single
%coin tossing with different probabilities while tolerating even a single
%malicious player:
We can now state our main result:

%joe2: let's call it a theorem
%\begin{proposition}\label{proposition:main}
\begin{theorem}\label{proposition:main}
%ivan5: reformulated:
\commentout{
%joe1: rewrote; it's too fuzzy
%  There exists no $1$-resilient implementation for $p$-Consensus such
%that the expected number of bits sent by honest players in
%$\vec{\pi}^p$ is uniformly bounded for all $p \in \mathbb{R} \cap
  %[0,1]$.  
There is no 
%ivan4: important
%joe2: wrong place
%$1$-resilient
uniformly bit-bounded family $\vec{\pi}^p$ of
joint protocols (parameterized by $p \in 
%joe2
%[0,1]$) such that $\vec{\pi}^p$ implements 
%ivan8:
%$p$-consensus
a common $p$-biased coin 
 for $p \in [0,1]$.
[0,1]$) such that $\vec{\pi}^p$ 
%ivan5:
\commentout{
implements
$p$-consensus and is 1-resilient for all $p \in [0,1]$. 
}
is a $1$-resilient implementation of $p$-consensus for $p \in [0,1]$.
%joe2$
%\end{proposition}
}
%joe3
%If $n > 1$, for every infinite set $S \subseteq [0,1]$, there is no
For all $n > 1$ and all infinite sets $S \subseteq [0,1]$, there is no
%joe3
%uniformly bit-bounded family $\vec{\pi}^p$ of 
%joint protocols  for $n$ agents(parameterized by $p \in
%S$) such that $\vec{\pi}^p$
uniformly bit-bounded family $P_S = \{\pi^p: p \in S\}$ of
joint protocols 
such that $\vec{\pi}^p$ 
is a $1$-resilient implementation of a
%ivan8:
%$p$-consensus
common $p$-biased coin 
 for $n$ agents for
all $\pi^p \in P_S$.
\end{theorem}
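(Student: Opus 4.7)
I argue by contradiction. Assume there is an $N$-uniformly bit-bounded family $P_S = \{\pi^p : p \in S\}$ of $1$-resilient $p$-biased coin protocols. Because each honest agent sends at most $N$ bits, its local program is a bounded-depth decision tree whose internal nodes are either deterministic bit dispatches or biased coin flips, and there are only finitely many such trees for $n$ agents at depth $N$. Pigeonhole lets me pass to an infinite subfamily on which every $\pi^p$ shares a common tree structure, so that each protocol is encoded by a vector $\vec{x}_p \in [0,1]^{K}$ of coin biases with $K$ fixed. Because honest agents ignore anything beyond the first $N$ bits of each incoming message, a deviation by a single corrupted player $i$ can likewise be reduced to a finite-dimensional parameter $\vec{y} \in [0,1]^{L}$.

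For each choice of corrupted agent $i$, summing over the finitely many execution paths expresses the probability that the $n-1$ honest players all output $1$ as a polynomial $F_i(\vec{x}, \vec{y})$ in the coin biases. The $1$-resilience requirement says that, for every $p \in S$ and every $i$, $F_i(\vec{x}_p, \vec{y}) = p$ identically in $\vec{y}$. Collecting coefficients of each nontrivial $\vec{y}$-monomial, this is equivalent to $\vec{x}_p$ lying in the real algebraic set $V$ on which each such coefficient (a polynomial in $\vec{x}$) vanishes, and on $V$ the attained bias equals $\rho(\vec{x}) := F_1(\vec{x}, \vec{y}_0)$ for any fixed $\vec{y}_0$. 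Since $S \subseteq \rho(V)$, the theorem reduces to showing that $\rho(V)$ is finite.

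This is where the algebraic-geometry theorem on finiteness of critical values of polynomials enters. The plan is to construct an auxiliary polynomial $Q$---in the variables $\vec{x}$ together with Lagrange-multiplier variables for the defining equations of $V$ and slack variables (e.g.\ $x_j = z_j^2$, $1 - x_j = w_j^2$) for the cube constraints $x_j \in [0,1]$---such that every value attained by $\rho$ on $V$ coincides with $Q$ evaluated at some critical point of $Q$. By the theorem, $Q$ has only finitely many critical values, so $\rho(V)$ is finite, contradicting the infinitude of $S$.

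The hard part is precisely this last reduction. A naive Lagrangian only produces Lagrange-critical values of $\rho|_V$, which may be a strict subset of $\rho(V)$; one must exploit the especially rigid structure of $V$---defined by the simultaneous identical vanishing of every $\vec{y}$-partial derivative of each $F_i$---to promote every point of $V$, and not merely the Lagrange-critical ones, into a critical point of an appropriately augmented $Q$. Carrying this out cleanly, together with the polynomialization of the cube constraints, looks to be the technical heart of the proof.
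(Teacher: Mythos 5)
There is a genuine gap, and you have correctly located it yourself: the reduction from ``$S \subseteq \rho(V)$'' to the finiteness of critical values is never carried out, and it is exactly the technical heart of the theorem. Your proposed route via Lagrange multipliers and slack variables is also the wrong one: as you note, a Lagrangian construction only certifies the \emph{constrained-critical} values of $\rho$ on $V$, and there is no reason a priori that every point of $V$ is constrained-critical. The paper avoids this entirely by exploiting the structure you allude to but do not use. Resilience does not merely place $\vec{x}_p$ on a variety $V$; it says that $P$, viewed as a function of player $i$'s \emph{own} block of probability variables with the other blocks frozen at $\vec{x}_p$, is constant (equal to $p$) on the entire probability simplex of that block. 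After eliminating the affine constraint $\sum_s x^i_s = 1$ by the substitution $x^i_\ell = 1 - \sum_{j<\ell} x^i_j$ (yielding $Q = P \circ \gamma$ in $(\ell-1)n$ free variables), constancy on a full-dimensional simplex forces the restricted polynomial to be \emph{identically} constant, so all of its partial derivatives in block-$i$ variables vanish at $\phi(\vec{x}_p)$ --- no multipliers, no slack variables for the box constraints, and no distinction between critical and constrained-critical points. Running this over every $i$ shows $\phi(\vec{x}_p)$ is an honest (unconstrained) critical point of $Q$ with value $p$, and the finiteness of critical values finishes the argument. Without this observation your proof does not close.

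A secondary, smaller issue: your pigeonhole over ``bounded-depth decision trees'' is under-justified. A protocol in which each honest agent sends at most $N$ bits can still use unboundedly many rounds and unboundedly many internal coin flips, so the number of tree shapes and the dimension $K$ of your bias vector are not obviously finite. The paper handles this by first proving a bisimulation lemma collapsing any such protocol to a one-round ``basic'' protocol in which each agent broadcasts a single $k$-bit message encoding its entire contingent behavior; the relevant polynomial is then the fixed multilinear form $\sum_{s_1,\ldots,s_n} x^1_{s_1}\cdots x^n_{s_n} f(s_1,\ldots,s_n)$ with $\{0,1\}$ coefficients, of which there are finitely many. You would need an analogous normalization (e.g., collapsing each agent's local randomness into a single distribution over its finitely many possible transcripts) to make your finiteness claim rigorous.
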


%ivan5:
%joe3*: As I said before, this people will find this a competely
%uninteresting corollary
%As an immediate consequence of Theorem~\ref{proposition:main} we have
%the following corollary: 
%
%\begin{corollary}
%There is no uniformly bit-bounded family $\vec{\pi}^p$ of
%joint protocols such that $\vec{\pi}^p$ 
%is a $1$-resilient implementation of $p$-consensus for $p \in [0,1]$.
%\end{corollary}

%ivan5:
%\section{Proof}

%ivan4:
%joe2: moved this below, after the examples
%We will prove Proposition~\ref{proposition:main} in a synchronous

%ivan4:
%joe2
%We next provide, under this model, an $(n-1)$-resilient family
Before proving Theorem~\ref{proposition:main}, 
%ivan5:
%joe3: undid change.  The hype about why it's important should be in
%the introduction
we show, as we claimed in the introduction,
%it is important to note 
that the two
assumptions (1-resilience and uniformly bit-boundedness) are necessary.
We first provide an $(n-1)$-resilient family
%ivan5: changing all [0,1] for S
$\{\vec{\pi}^p\}_{p \in S}$ of joint protocols that 
%ivan5
%joe8: undid
implement
%implements
%ivan8:
%$p$-consensus
common $p$-biased coins
%joe2
(where $n$ is the number of players) 
that is 
%joe2
%\emph{not} uniformly bounded. Given $p \in [0,1)$, suppose
  %  $0.a_1a_2\ldots$ is the binary representation of $p$ (i.e. $a_k$ is
%  the $k$ digit in base 2 after the decimal point). At each round $r
%  \ge 1$ each player $i$ broadcasts a random bit $b_{i,r}$ set to 0 or
  %  1 with probability $1/2$ each. Then, if $r > 1$, each player $i$
%    computes $b_{r-1} := (\sum_{j \in [n]} b_{j,r}) \bmod 2$, in which
  %  $b_{j, r-1}$ is set to $0$ if $j$ didn't broadcast a bit in round
%    $r-1$. If $b_{r-1} < a_{r-1}$, $i$ outputs $0$ and terminates. Else
%  if $b_{r-1} > a_{r-1}$, $i$ outputs $1$ and terminates. Note that
\emph{not} uniformly bit-bounded.
%ivan5:
%With
If $1 \in S$, with
 joint protocol $\vec{\pi}^1$, all players simply output 
%ivan5: p is the probability to output 1 now
%0
1
 and terminate.
For joint $\vec{\pi}^p$ with $p \in S \setminus \{1\}$, we proceed as follows.  Suppose
that $0.a_1a_2\ldots$ is the binary representation of $p$ (i.e., $a_k$ is
%joe2
%the $k$th digit in base 2 after the decimal point). At each round $r
the $k$th digit after the decimal point in the binary representation
of $p$). At each round $r
  \ge 1$, each player $i$ broadcasts a uniformly distributed random
  bit $b_{i,r}$.
%joe9: moved up from below
(Note that even though we are assuming the existence of a broadcast
%joe9
%  channel, Bracha~\cite{Bracha87} provided a deterministic broadcast
  channel, this is without loss of generality, since
  Bracha~\cite{Bracha87} provides a deterministic broadcast 
  protocol that requires finitely many rounds of
communication.) 
  If $r > 1$, each player $i$
  computes $b_{r-1} := (\sum_{j \in [n]} b_{j,r}) \bmod 2$ (where we take
  $b_{j, r-1} = 0$ if $j$ did not broadcast a bit in round
  $r-1$). If $b_{r-1} < a_{r-1}$, then $i$ outputs 
%ivan5:  
%  $0$
$1$
   and terminates;
  if $b_{r-1} > a_{r-1}$, then $i$ outputs 
%ivan5:
  %$1$
  $0$
   and terminates;
%joe2: added
  if $b_{r-1} = a_{r-1}$ and $i$ has not yet terminated, then $i$
  continues to the next round.
  Note that
%joe1
  %  since all the bits are being broadcasted, all honest players agree
%  on the same value. Moreover, as long as a single player is honest,
%  each bit $b_r$ is equal to 0 and to 1 with probability $1/2$
%  each. Thus, we can easily show by induction that the probability
%  each. Thus, we can easily show by induction that the probability
  since all the bits are broadcast, all honest players agree
  on their value. Moreover, since at least one player is honest,
  each bit $b_r$ is uniformly distributed.
%ivan5:
%each. 
%joe3: What exactly is the inductive claim.  This needs to be made
%more precise
%ivan7:
\commentout{
  We can easily show by induction that the probability
  that honest players agree on 
%ivan5:
  %0
  1
   is $\sum_{r \ge 1} a_r 2^{-r}$, which
  is exactly $p$. 
  }
  We can easily show by induction on the number of communication
  rounds that the probability 
  that honest players agree on 
  $1$ by the end of the $r$th round of communication is $\sum_{k =
    1}^r a_k 2^{-k}$, the probability that they agree on $0$ is
  $\sum_{k = 1}^r (1-a_k) 2^{-k}$, and the probability that they have
  not agreed yet is $2^{-r}$. This means that the probability that the
%joe7
%  agents eventually terminate is $1$, and that they agree on $1$ is
  agents eventually terminate is $1$, 
%ivan9:  
 %and
  that the probability that
  they agree on $1$ is 
  $\sum_{r \ge 1} a_r 2^{-r}$, which is exactly $p$,
  %ivan8:
%joe9
  %  and that the expected number of messages sent by each agent is 2.
%joe10
%  and that each agent sends 2 messages in expected number of messages.
    and that each agent sends 2 messages in expectation.
%joe9: this is overkill
%  Note, however,
%  that this protocol requires an unbounded amount of messages for all
%  $p$, even though in some cases it can be modified in such a way that
%  it only requires a finite amount. For instance, if $p$ is of the
%  form $a/2^b$, after the $b$th round of communication agents can
%  simply output $0$ since all the remaining digits are zeroes. 
Note, however, that this family of protocols has no uniform bound on
the number of messages sent for all $p \in [0,1]$.
  %ivan8:
  %joe9:
  %Therefore we have the following result:
  The argument above proves the following result:
\begin{theorem}\label{thm:expected-bounded}
  %joe9*: You need to talk about n
%  There exists a family of joint protocols $\{\vec{\pi}^p : p \in
  For all $n$, 
  there exists a family of joint protocols $\{\vec{\pi}^p : p \in
[0,1]\}$ such that $\vec{\pi}^p$
%ivan9: rewritten
\commentout{
 is a $t$-resilient implementation of
%joe9
%a common $p$-biased coin for all $t$, and such that $\vec{\pi}^p$ has
%a finite expected message complexity. 
 is a $t$-resilient implementation of 
a common $p$-biased coin 
%ivan9:
for $n$ agents, 
for all $t<n$,
}
%joe10
%is an implementation of a common $p$-biased coin for $n$ agents which
is an implementation of a common $p$-biased coin for $n$ agents that
is $t$-resilient for all $t < n$, 
 and 
%joe10*
 % $2n$ messages are sent in $\vec{\pi}^p$ for all $p \in [0,1]$.
such that,  for all $p \in [0,1]$, the expected number of messages
sent in $\vec{\pi}^p$ is $2n$.
\end{theorem}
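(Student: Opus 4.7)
The plan is to verify that the protocol $\vec{\pi}^p$ constructed in the paragraphs preceding the statement has the four required properties: (i) every honest player eventually terminates, (ii) all honest players output the same bit, (iii) that common bit equals $1$ with probability exactly $p$ regardless of the strategy of up to $n-1$ malicious players, and (iv) the expected total number of messages is $2n$. Together, (i)--(iii) say that $\vec{\pi}^p$ is a $t$-resilient implementation of a common $p$-biased coin for every $t < n$.

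Agreement (ii) is immediate from the use of the broadcast channel: every honest agent sees the same value of every $b_{i,r}$, therefore computes the same $b_r$, and performs the same comparison against $a_r$. The main step---and where I expect the one genuine subtlety to lie---is to show in (iii) that each round's aggregate bit $b_r = \bigoplus_{j \in [n]} b_{j,r}$ is uniformly distributed on $\{0,1\}$ and independent of $b_1, \ldots, b_{r-1}$ against any adversarial strategy. Since $|T| \le n - 1$, there is at least one honest player $i^*$, and by construction $b_{i^*,r}$ is an unbiased coin flip independent of the prior history. In a synchronous round, the malicious agents must commit to their $b_{j,r}$ without having observed $b_{i^*,r}$; hence $b_r$ equals $b_{i^*,r}$ XORed with a quantity independent of $b_{i^*,r}$, and is therefore itself uniform and independent of the history.

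Given uniformity, the induction sketched in the text goes through cleanly: letting $P^1_r$, $P^0_r$, and $P^?_r$ denote the probabilities of having output $1$, output $0$, or not yet decided by the end of round $r$, one checks that $P^?_r = 2^{-r}$, $P^1_r = \sum_{k=1}^{r} a_k 2^{-k}$, and $P^0_r = \sum_{k=1}^{r} (1-a_k) 2^{-k}$. Letting $r \to \infty$ yields termination with probability $1$ together with $P^1 = p$, establishing (i) and (iii); the boundary cases $p \in \{0,1\}$ are handled by choosing the appropriate binary representation (or, for $p=1$, the trivial protocol). For the message count (iv), each round costs exactly $n$ broadcasts, and the index of the terminating round is the first $r$ with $b_r \ne a_r$, which by uniformity is geometric with success probability $1/2$ and mean $2$; multiplying gives $2n$ expected messages, so each agent sends $2$ messages in expectation as claimed.
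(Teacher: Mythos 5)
Your proposal is correct and follows essentially the same route as the paper: agreement via the broadcast channel, uniformity of each round's parity bit from the existence of at least one honest agent, the induction giving residual probability $2^{-r}$ after $r$ rounds, and the geometric round count yielding $2n$ expected messages. Your explicit remark that malicious agents must commit to their round-$r$ bits before seeing the honest agent's bit makes precise the same (non-rushing) assumption the paper uses implicitly.
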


%joe9: moved up the discussion of Braha that was here before.

%joe2
%  Note that there exists a trivial family $\{\vec{\pi}^p\}_{p \in
  %  [0,1]}$ of protocols that implements $p$ consensus for all $p \in
%[0,1]$ in which players send at most $n$ bits: Player 1 generates a
%joe3*: This assumes implicitly that players have access to a device
%that lets the generate a bit with probaility p for arbitrary p.
%People won't like this assmption, and it has to be made explicit.
  %You shold say that the lower bound holds even if players can do that.
  There 
%ivan7:
also  
  exists a trivial family $\{\vec{\pi}^p\}_{p \in
        S}$ of $n$-uniformly bit-bounded joint protocols (where $n$ is
  the number of players) that
  $\vec{\pi}^p$ implements 
  %ivan8:
  %$p$-consensus 
  common $p$-biased coins 
  for all $p \in 
S$: In 
%ivan5:
%$\vec{\pi}6p$,
$\vec{\pi}^p$,
 player 1 generates a
bit $b$ that is 
%ivan5:
%$0$ with probability $p$ and $1$ with probability
$1$ with probability $p$ and $0$ with probability
$1-p$
 and broadcasts $b$
%ivan7: 
%joe7
%(note that we can assume that $i$ can generate such bit without loss
 % of generality since even if $i$ can only generate bits with equal
(note that we can assume that $i$ can generate such a bit without loss
 of generality since even if $i$ can generate only bits with an equal
 probability of being 0 and 1, it can run the protocol described
%joe7
 % before locally);
  above locally to generate a bit with an arbitrary bias);  
  each player outputs whatever is sent by
player $1$. Of course, this protocol is not $1$-resilient: if player
$1$ was malicious it could send a bit with a different probability or
no bit at all. This, along with 
%ivan9
%previous example,
%joe9
%the example provided for
Theorem~\ref{thm:expected-bounded},
 shows that
%joe2
%Proposition~\ref{proposition:main} is no longer true if we drop either
%ivan5:
%Proposition~\ref{proposition:main}
Theorem~\ref{proposition:main} 
 does not hold if we drop either
the $1$-resilience or the uniformly bit-bounded condition.  

%ivan5:
\section{Proof of Theorem~\ref{proposition:main}}\label{sec:proof}

%ivan4:
%The
%ivan5: out
\commentout{
We now prove Theorem~\ref{proposition:main}.  We do so in a synchronous
model in which players have broadcast channels
%joe2: added
(in addition to point-to-point communication).
%ivan4: I don't know if we should add this
%Note that if we prove Proposition~\ref{proposition:main} in this case,
%it is also true for asynchronous systems and when players don't have
%broadcast channels. 
%joe2
%The result immediately follows true for asynchronous systems and if
The result immediately follows for asynchronous systems and if
there are no broadcast channels. 
}

%joe2
%To prove Proposition~\ref{proposition:main}, the
% first step is showing that we can assume without loss of
The first step in the proof is showing that we can assume without loss of
generality that agents use a \emph{basic protocol}, in which 
%joe1*: I don't know what it means that a message has a fixed number of
%bits.  Every message has a fixed number of bits.  I wrote what I
%suspect you meant
%all honest agents send a single message with a fixed amount of bits $k$
there is some $k$ such that, at the first round, all 
honest agents broadcast a single 
%ivan4: it is easier this way:
%$k$-bit message,
message of at most $k$ bits,
 and in the second
round, the agents  output
%joe
%a non-deterministic function of all bits sent in round 1.
a (possibly randomized) function of the bits received in round 1.
%joe3: removed paragraph break
%ivan5: should we say it was introduced by us? I think we don't need
%both directions for this case, however this is more general 
%joe3*: We should at least point to the use of t-bimsimlation in our
%other paper, although there the definition was more complicated since
%it involved asynchrony.  I also don't understand what you mean when
%you say ``we don't need both directions''.  What don't we need both
%directions for?
%For this purpose we need the following definition:
%ivan7: we only need that all the outcomes in \pi' can be achieved in
%\pi, but not the other way around. However this way we highlight how
%strong is this construction since it goes both ways. 
To make this precise, we need the following
%joe7
%definition, which was introduced by Geffner and Halpern~\cite{GH18}:
definition, due to Geffner and Halpern~\cite{geffner2021security}: 

\begin{definition}
%joe3
%  A protocol $\vec{\pi}$ $t$-bisimulates a protocol $\vec{\pi}'$ if
  A protocol $\vec{\pi}$ $t$-bisimulates protocol $\vec{\pi}'$ if
  \begin{itemize}
%joe3: you have to first define an adversary as a pair (T,\tau)
%  \item[(a)] For all adversaries $A = (T, \vec{\tau}_T)$ there exists an
      \item[(a)] for all adversaries $A = (T, \vec{\tau}_T)$
%ivan10:
with $|T| \le t$      
       there exists an
  adversary $A' = (T, \vec{\tau}'_{T})$ such that $\vec{\pi}(A)$ and
%joe3
  %  $\vec{\pi}'(A')$ are identically distributed.
%\item[(b)] For all adversaries $A' = (T, \vec{\tau}'_T)$ there exists
  $\vec{\pi}'(A')$ are identically distributed; and 
\item[(b)] for all adversaries $A' = (T, \vec{\tau}'_T)$ 
%ivan10:
with $|T| \le t$ 
there exists
  an adversary $A = (T, \vec{\tau}_{T})$ such that $\vec{\pi}(A)$ and
  $\vec{\pi}'(A')$ are identically distributed. 
\end{itemize}
\end{definition}

%ivan5: actually it also means that the adversary learns the same
%information in both scenarios, however this is trivial in this case
%because there are no inputs. 
Intuitively, a protocol $\vec{\pi}$ $t$-bisimulates another protocol
$\vec{\pi}'$ if they produce the same outputs regardless of what the
adversary does. The fact that we can assume without loss of generality
that agents use a basic protocol is a
%joe3
%direct consequence of the following proposition.
follows immediately from the following proposition. 

\begin{proposition}\label{prop:implementation}
%ivan5: usually in math we write $i \in I$, where I is an arbitrary
%set of index. I left 'i' as it is since I don't know if this is
  %standard notation.
  %joe5: this is fine if you write I; without it, it's horrible
  %notation.  You also want to say that \tau_i is basic.  Finally, i
  %is overloaded, since you also use it for an agent.
  %If $\{\vec{\pi}^i\}_i$ is a uniformly bit-bounded family of joint
%protocols, there exists a family of uniformly bit-bounded joint
%protocols $\{\vec{\tau}^i\}_i$ such that $\vec{\tau}^i$
%$t$-bisimulates $\vec{\pi}^i$ for all $t$ and $i$. 
%ivan8: refactor
\commentout{
  If $\{\vec{\pi}^j: j \in J\}$ is a uniformly bit-bounded set of joint
  protocols (for some index set $J$), then there exists a uniformly
  bit-bounded set 
  $\{\vec{\tau}^j: j\in J\}$ of basic protocols such that $\vec{\tau}^j$
%joe4
  %  $t$-bisimulates $\vec{\pi}^i$ for all $t$ and $i$.
    $t$-bisimulates $\vec{\pi}^j$ for all $t$ and $j \in J$. 
    }
      If $\{\vec{\pi}^j: j \in J\}$ is a uniformly bit-bounded set of joint
  protocols (for some index set $J$), then there exists a set 
  $\{\vec{\tau}^j: j\in J\}$ of basic protocols and a constant $k$ such that
\begin{itemize}
\item [(a)]   $\vec{\tau}^j$ $t$-bisimulates $\vec{\pi}^j$ for all $t$
%joe9
  %  and $j \in J$.
%\item [(b)] In every history of every joint protocol $\vec{\tau}^j$
  and $j \in J$;  and
\item [(b)] in every history of every joint protocol $\vec{\tau}^j$
  with $j \in J$, each message from each honest agent contains exactly
  $k$ bits. 
\end{itemize}
\end{proposition}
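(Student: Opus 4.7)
The plan is to use the uniform bit-bound $N$ to argue that each honest agent's \emph{entire} behavior in $\vec{\pi}^j$ (over all rounds and all message histories) can be encoded as a single finite-length object, then have agents broadcast this object in one round and simulate the rest of the protocol locally.

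First, fix a bound $N$ such that every honest agent sends at most $N$ bits in every history of every $\vec{\pi}^j$. Because each honest agent is a (possibly randomized) function mapping received transcripts to outgoing messages, and because the set of transcripts it can ever produce has total length bounded by $N$, there are only finitely many transcript trees the agent could ever realize; hence the entire pure strategy of a given run (the randomness sampled upfront together with the induced deterministic response function from received messages to outgoing messages and final output) can be encoded as a bit string of length at most some $k = k(n, N)$ depending only on $n$ and $N$, and in particular not on $j$. I would pad all such encodings to exactly $k$ bits so that condition~(b) of the proposition holds verbatim.

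Next, I would define $\vec{\tau}^j$ as the following basic protocol. In round~1, each honest agent $i$ samples the randomness it would have used in $\vec{\pi}^j$, computes the induced deterministic response function $\sigma_i$ together with its output rule, and broadcasts the $k$-bit encoding of $\sigma_i$. In round~2, every honest agent simulates $\vec{\pi}^j$ locally: using the broadcast $k$-bit strings as the honest agents' response functions, and interpreting each malicious agent's broadcast $k$-bit string as specifying how that agent would reply in every round of the original protocol (including possibly different messages over different point-to-point channels, since the string encodes an arbitrary function of the transcript so far and of the recipient). Agent $i$ then outputs whatever the simulation says its output would be in $\vec{\pi}^j$.

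Finally, I would verify the two bisimulation clauses. For (a), given an adversary $A = (T, \vec\tau_T)$ in $\vec{\pi}^j$, the corresponding adversary $A'$ in $\vec{\tau}^j$ has each malicious agent sample its internal randomness for $\vec\tau_T$ and broadcast the $k$-bit encoding of the resulting response function; the local simulation performed by honest agents then reproduces $\vec{\pi}^j(A)$ exactly. For (b), given $A' = (T, \vec\tau'_T)$ in $\vec{\tau}^j$, I define $A$ in $\vec{\pi}^j$ so that the malicious agents first privately sample $\vec\tau'_T$, extract the $k$-bit ``strategy'' strings it would have broadcast, and then, round by round, send exactly the messages these strings dictate; by construction the honest agents' simulated outputs in $\vec{\tau}^j$ coincide with their real outputs in $\vec{\pi}^j$. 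The main obstacle in this plan is the seemingly lossy move from point-to-point communication in $\vec{\pi}^j$ to a single broadcast round in $\vec{\tau}^j$: one must ensure that the malicious agents' broadcast of a \emph{single} strategy string does not restrict their power relative to the original protocol, where they could send inconsistent messages on different channels. This is handled by the fact that the encoded strategy is an arbitrary function of the transcript and recipient, so equivocation across channels is preserved inside the local simulation and the range of reachable output distributions is unchanged.
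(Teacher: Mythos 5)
Your proposal is correct and follows essentially the same route as the paper: in round~1 each agent broadcasts a bounded-length encoding of its entire (derandomized) response function for $\vec{\pi}^j$, in round~2 all agents locally simulate $\vec{\pi}^j$ from these encodings, adversaries are translated in both directions by extracting or replaying the encoded strategy, and messages are padded to a fixed length $k$ to get condition~(b). Your explicit handling of per-recipient equivocation by malicious agents is a point the paper treats only implicitly (by encoding recipients and channels in the round-1 message), but it is the same construction.
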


%ivan8:
%joe9: already defined (actually, not quite, because you mean exactly
%k here, but we can worry about that later.
%By the nature of Proposition~\ref{prop:implementation}, it is useful
%joe8
%to define a $k$-uniformly bit bounded family of joint protocols as one
%in which the size of the messages of honest players in each of those
%protocols is of exactly $k$ bits. 
%joe9
%Note that,
Note that
%ivan8
 %in particular,
 in Proposition~\ref{prop:implementation},
  if $\vec{\pi}^j$ is a $t$-resilient
%joe3
%implementation of $p$-consensus for some $t$ and $p$, so is
%$\vec{\tau}^j$. In particular we have the following corollary: 
implementation of 
%ivan8:
%$p$-consensus
a common $p$-biased coin 
 for some $t$ and $p$, then so is
%joe9
 $\vec{\tau}^j$,
%ivan8: 2x in particular
%In particular, 
 %Therefore,
 so
we have the following corollary: 

\begin{corollary}\label{corollary1}
  If there exists a set $S \subseteq [0,1]$ and a uniformly bit-bounded set
  $\{\vec{\pi}^p: p \in S\}$ of joint protocols
  such that $\vec{\pi}^p$ is a $1$-resilient implementation of a common $p$-biased coin for each $p \in S$,
%joe5
%  there exists a uniformly bit-bounded family of basic joint protocols
%  $\{\vec{\tau}^p\}_{p  \in \mathbb{R} \cap  [0,1]}$
%ivan8: rewritten
%joe9
%  then there exists $k \in \mathbb{N}$ and a $k$-uniformly bit-bounded set
  then for some $k \in \mathbb{N}$, there exists a uniformly
  bit-bounded set 
  $\{\vec{\tau}^p: p \in S\}$
 of basic joint protocols
 such that $\vec{\tau}^p$ is a $1$-resilient implementation of a
 %joe8
 common 
 $p$-biased coin for each $p \in S$,
 %joe9:
 and each agent sends exactly $k$ bits in $\tau^p$.
\end{corollary}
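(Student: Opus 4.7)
The plan is to derive the corollary directly from Proposition~\ref{prop:implementation} by taking $J = S$ and checking that the three properties we need for the conclusion---basic form with a uniform bit length, 1-resilient implementation of a common $p$-biased coin, and uniform bit-boundedness---are each either given by the proposition or transferred through the bisimulation.

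First, I would apply Proposition~\ref{prop:implementation} to the given family $\{\vec{\pi}^p : p \in S\}$, obtaining a constant $k$ and a set of basic protocols $\{\vec{\tau}^p : p \in S\}$ such that (a) $\vec{\tau}^p$ $t$-bisimulates $\vec{\pi}^p$ for every $t$ and $p \in S$, and (b) every honest agent sends exactly $k$ bits in every history of every $\vec{\tau}^p$. Property (b) immediately gives uniform bit-boundedness (in fact, $kn$-bit-boundedness for $n$ agents) and also gives the ``exactly $k$ bits per agent'' clause of the corollary.

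The substantive step is transferring 1-resilience. Fix $p \in S$ and an arbitrary adversary $A' = (T, \vec{\tau}'_T)$ against $\vec{\tau}^p$ with $|T| \le 1$. By clause (b) of the definition of bisimulation (applied with $t = 1$), there is an adversary $A = (T, \vec{\tau}_T)$ against $\vec{\pi}^p$ such that $\vec{\pi}^p(A)$ and $\vec{\tau}^p(A')$ are identically distributed; in particular, their restrictions to the honest players $[n] \setminus T$ are identically distributed. Since $\vec{\pi}^p$ is a 1-resilient implementation of a common $p$-biased coin, the honest-player restriction of $\vec{\pi}^p(A)$ is $(1,\ldots,1)$ with probability $p$ and $(0,\ldots,0)$ with probability $1-p$, so the same holds for $\vec{\tau}^p(A')$. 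As $A'$ was arbitrary, $\vec{\tau}^p$ is a 1-resilient implementation of a common $p$-biased coin, as required.

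There is no real obstacle here: the whole content of the corollary is that the bisimulation in Proposition~\ref{prop:implementation} preserves the property of being a 1-resilient common $p$-biased coin, which is a statement purely about the distribution of honest outputs under arbitrary adversaries---exactly the invariant that bisimulation protects. The only thing worth being careful about is using the correct direction of bisimulation (starting from an adversary against $\vec{\tau}^p$ and pulling it back to $\vec{\pi}^p$), which is clause (b) of the definition.
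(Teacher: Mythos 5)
Your proposal is correct and follows essentially the same route as the paper, which derives the corollary directly from Proposition~\ref{prop:implementation} together with the observation (stated just before the corollary) that $t$-bisimulation preserves the property of being a $t$-resilient implementation of a common $p$-biased coin. You merely make explicit the use of clause (b) of the bisimulation definition to pull an adversary against $\vec{\tau}^p$ back to one against $\vec{\pi}^p$, which is exactly the intended argument.
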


%ivan5:
%\begin{proof}
\begin{proof}[Proof of Proposition~\ref{prop:implementation}]
%ivan4: rewritten from scratch
%joe2
%  Given a protocol $\vec{\pi}$, let $\vec{\tau}^{\vec{\pi}}$ be the
  Given a joint protocol $\vec{\pi}$, let $\tau_i^{\vec{\pi}}$ be the
  protocol where player $i$ broadcasts a message in round 1 that
%joe2
%contains all messages that $i$ would send with $\pi_i$ for all local
%histories with at most $N$ bits.  Then, at round $2$, each player $i$
contains all the messages that $i$ would send with $\pi_i$ in 
each local history $h$ with at most $N$ bits (for instance, if 
%ivan10: n -> m
%$h_i^1,\ldots, h_i^n$
$h_i^1,\ldots, h_i^m$
 are all possible histories of $i$ with at most $N$
%joe7
%bits, $i$ can send a message of the form $(msg_1, \ldots, msg_k)$,
bits, then $i$ can send a message of the form 
%ivan10: k -> m
%$(msg_1, \ldots, msg_k)$,
$(msg_1, \ldots, msg_m)$,
%ivan10: i might send more than one message
\commentout{
where $msg_j$ is the message that $i$ would send with history $h^j_i$,
with the recipient and the channel encoded in the message itself).
}
where $msg_j$ contains all the messages that $i$ would send with history
$h^j_i$, with the recipients and the channels encoded in the message itself).
In
round $2$, $\tau_i^{\vec{\pi}}$   
%joe2
%simulates what local history each player would have in $\vec{\pi}$ if
%all players played according to what they sent in their message at
%round 1. This is done inductively as follows: given the simulated
%joe3
%simulates what local history each player $j$ would have in $\vec{\pi}$ if
computes what local history each player $j$ would have in $\vec{\pi}$ if
all players played according to the protocol implicitly described in
their round-1 message 
%joe3
%round 1. This is done inductively as follows: given the simulated
round 1. This is done inductively as follows: Suppose that $i$ has
computed the
local history $\vec{h}^r$ that players would have by the end of round
%joe3
%$r$,  $i$ computes $\vec{h}^{r+1}$ by appending all messages that
$r$.  Agent $i$ then computes $\vec{h}^{r+1}$ by appending all messages that
each 
player $j$ would have sent given $h^r_j$, according to the message $j$
sent at round 1. If $j$ sent a message with an incorrect format at
%joe2
%round 1, $i$ simulates that $j$ sends no messages at all. Player $i$
round 1, then $i$ assumes that $j$ sends no messages at all
according to its protocol. Player $i$
%joe2
%computes $\vec{h}^r$ this way until reaching a terminating state (in
%joe5
%computes $\vec{h}^r$ this way until it reaches a terminating state
computes $\vec{h}^r$ in this way until it reaches a terminating state 
%joe3
%(i.e., a state in which no player sends a message from there on).
(i.e., a state in which no player sends a message from then on).
%joe2
%Clearly, by the end of
%round 2 all honest players would have simulated the same history.
Clearly, by the end of 
round 2, all honest players will have simulated the same history. 

%ivan8:
%joe9
%We now show part (a).
We now prove part (a). 
It is
easy to check that, by construction, for all adversaries $A = (T,
\vec{\rho}_T)$, 
$\vec{\pi}(A)$ and $\vec{\tau}^{\vec{\pi}}(T, (\vec{\tau}^{\vec{\rho}})_T)$
%joe7: removed blank line
%
are identically distributed for all inputs $\vec{x}$. Moreover, given
an adversary $A = (T, \vec{\rho}_T)$ in $\vec{\tau}^{\vec{\pi}}$,
%joe2
%consider an adversary $A = (T, \vec{\rho}')$ that does the following:
consider an adversary $A = (T, \vec{\rho}'_T)$ that does the following:
each player $i \in T$ computes which message $msg_i$ it would send at
%joe2*: shouldn't this be according to \pi_i, not \tau^\pi
%round $1$ if it played $\vec{\rho}$. If $msg_i$ has a correct format
%(according to $\vec{\tau}^{\vec{\pi}}$), it acts in $\vec{\pi}$
%exactly as described in $msg_i$, otherwise it sends no messages at all
%during the whole protocol. Again, by construction
%ivan5: we are going the other way around, given an adversary in the
%basic protocol, we want to construct an adversary in the original
%protocol. In this case, they would compute what would they have sent
%in the basic one and play according to it. 
%joe5
%round $1$ if it ran $\rho_i$. If the format $msg_i$ is correct
%joe3: what does it mean that the format is correct according to \pi?
%This ahs t obe defined
%ivan7: we already defined how is a 'correct' message in \pi
round $1$ if it ran $\rho_i$. If the format of $msg_i$ is correct 
(according to $\pi_i$), when running $\rho_i'$, $i$  acts 
%joe3: What does it mean to act exactly as described in a message?
%Messages in general to describe how to act.
exactly as described in $msg_i$
%ivan7:
%(recall that $msg_i$ should describe all messages that $i$ would send
(recall that $msg_i$ is supposed to describe all messages that $i$ would send
in every possible history); 
 otherwise, $i$ sends no messages at
all.
%ivan7: redundant
%when $\rho'_i$.  
Again, by construction,
%ivan7:
%$(\vec{\tau}^{\vec{\pi}}, A, \vec{x})$ and $(\vec{\pi}, A', \vec{x})$
$\vec{\tau}^{\vec{\pi}}(A)$ and $\vec{\pi}(A')$
are identically distributed for all inputs $\vec{x}$. This shows that
$\vec{\tau}^{\vec{\pi}}$ $t$-bisimulates $\vec{\pi}$ for all $t$.  

Finally, note that the length of the message that each honest player
sends in round 1 with $\vec{\tau}^{\vec{\pi}}$ is bounded by a
function of $N$, and thus if  
%joe5
%$\{\vec{\pi}^p\}_{p \in [0,1]}$ is uniformly bit-bounded, then so is
%$\{\vec{\tau}^{\vec{\pi}^p}\}_{p \in [0,1]}$.
$\{\vec{\pi}^j: j\in J\}$ is uniformly bit-bounded, then so is
$\{\vec{\tau}^{\pi^j}: j \in J \}$. 
%ivan8:
Part (b) follows from the fact that if $\{\vec{\pi}^j : j \in J\}$ is
a uniformly bit-bounded set of basic joint protocols such that the
size (in bits) of all messages from honest agents is bounded by some
constant $k$, then there exists a set of basic joint protocols
$\{\vec{\pi}^j_* : j \in J\}$ such that $\vec{\pi}^j_*$
$t$-bisimulates $\vec{\pi}^j$ for all $t$ and all $j \in J$, and all
messages of honest agents in every protocol $\vec{\pi}^j_*$ have
exactly $k + \log k$ bits. The construction of $\vec{\pi}^j_*$ is as
follows: if agent $i$ would send message $msg$ at round 1 with
$\vec{\pi}^j$, it sends a message $msg'$ of exactly $k + \log k$ bits
in $\vec{\pi}^j_*$, where the first $\log k$ bits of $msg'$ encode the
number of bits of $msg$ (in binary), and the last $k$ bits of $msg'$
contain $msg$ followed by a sequence of zeroes. It is easy to check
that each message $msg'$ in $\vec{\pi}^j_*$ defines a unique message
$msg$ in $\vec{\pi}^j$ (and vice-versa), and thus that $\vec{\pi}^j_*$
$t$-bisimulates $\vec{\pi}^j$ for all $t$ and all $j \in J$ as
desired.  
\end{proof}

%ivan8: I'll reestructure from here on.
Corollary~\ref{corollary1} implies that, to prove
Theorem~\ref{proposition:main}, it suffices to show that there are no
families of basic joint protocols that implement $p$-consensus for
infinite values of $p$ and such that the message size (in bits) is
fixed.
%joe8
%The proof revolves around a standard result in Algebraic
%Geometry\footnote{Cite here}. 
The key step proof involves a standard result in algebraic
geometry.

%ivan8: if we are doing a definition for critical points, I might add critical values as well.
\begin{definition}
Given a polynomial $P$ in several variables, a point $\vec{x}$ is a \emph{critical point} of $P$ if all partial derivatives of $P$ at $\vec{x}$ are 0. A value $v$ is a \emph{critical value} of $P$ if it is the image of a critical point under $P$ (i.e. if there is a critical point $\vec{x}$ such that $P(\vec{x}) = v$).
\end{definition}

\begin{theorem}\label{thm:poly} 
  %joe8: here is where you put the citation
  \cite[Exercise 4.9]{Coste02}
All polynomials $P$ in several variables have finitely
many critical values.
\end{theorem}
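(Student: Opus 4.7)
The plan is to combine Sard's theorem with the Tarski--Seidenberg theorem. Let $C = \{\vec{x} \in \mathbb{R}^n : \nabla P(\vec{x}) = \vec{0}\}$ denote the set of critical points of $P$. Since $C$ is the common zero set of the polynomials $\partial P/\partial x_1, \ldots, \partial P/\partial x_n$, it is an algebraic (hence semi-algebraic) subset of $\mathbb{R}^n$.

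First, by the Tarski--Seidenberg theorem, the image of a semi-algebraic set under a polynomial map is again semi-algebraic, so the set of critical values $P(C) \subseteq \mathbb{R}$ is semi-algebraic. The semi-algebraic subsets of $\mathbb{R}$ are precisely the finite unions of points and open intervals.

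Second, by the classical Sard theorem applied to the $C^\infty$ map $P : \mathbb{R}^n \to \mathbb{R}$, the set $P(C)$ has Lebesgue measure zero. Combining these two facts, $P(C)$ is a semi-algebraic subset of $\mathbb{R}$ of measure zero; since any nondegenerate open interval has positive measure, $P(C)$ must be a finite union of points.

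There is essentially no main obstacle here: the argument is a two-line consequence of standard tools from real algebraic geometry and differential topology. An alternative, more hands-on route would avoid invoking Sard by decomposing $C$ into its finitely many irreducible components $V_1, \ldots, V_k$ (using that $\mathbb{R}[x_1, \ldots, x_n]$ is Noetherian) and then showing that $P$ is constant on each connected component of the smooth locus $V_i^{\mathrm{sm}}$ of each $V_i$, since $\nabla P \equiv \vec{0}$ on $V_i$ forces $P \circ \gamma$ to be constant along any smooth path $\gamma \subseteq V_i^{\mathrm{sm}}$; the singular locus would then be handled by Noetherian induction, using the fact that a real algebraic variety has only finitely many connected components. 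The semi-algebraic route is however considerably shorter and avoids this bookkeeping.
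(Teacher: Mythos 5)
Your main argument is correct. Note, though, that the paper does not actually prove this theorem: it is quoted as an exercise from Coste's \emph{Introduction to Semialgebraic Geometry}, and your Tarski--Seidenberg-plus-Sard argument is exactly the intended solution of that exercise, so in effect you have supplied the proof the paper delegates to a reference. (The source contains a commented-out sketch that instead decomposes the critical locus into irreducible components over $\mathbb{C}$, passes to a smooth Zariski-dense open subset of each, and uses connectedness plus the mean value theorem to see that $P$ is constant on each component; that is essentially your ``alternative, more hands-on route,'' and it carries the bookkeeping burden you describe --- finitely many components, the singular locus, density in the analytic topology --- which your semi-algebraic argument avoids entirely.) Two small points worth making explicit: first, for a map $\mathbb{R}^n \to \mathbb{R}$ the critical points in Sard's sense (non-surjective differential) coincide with the paper's definition (vanishing gradient), so the two notions match; second, you work over $\mathbb{R}$ while the paper's polynomials $Q$ are formally defined over $\mathbb{C}$ --- this is harmless for the application, since the critical points produced in Proposition~\ref{prop:linear-maps} are images of real (probabilistic) points, and in any case the complex statement reduces to the real one by viewing $\mathbb{C}^n$ as $\mathbb{R}^{2n}$ (the Cauchy--Riemann equations identify the complex critical set with the real one).
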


%ivan8:
The main idea for proving Theorem~\ref{proposition:main} is to show
that we can associate a polynomial $Q^{\vec{\pi}}$ in several
%joe8
%variables to each basic joint protocol $\vec{\pi}$ such that all
%messages of honest agents have fixed size $k$. Moreover, we show that
variables with each basic joint protocol $\vec{\pi}$ such that all
messages of honest agents have a fixed size $k$. Moreover, 
if $\vec{\pi}$ is a 1-resilient implementation of a $p$-biased coin,
%joe8
%then $p$ is a critical value of $Q$. Theorem~\ref{proposition:main}
then $p$ is a critical value of $Q^\pi$. Theorem~\ref{proposition:main}
follows from the fact that if a family of basic joint protocols is
%joe8
%$k$-uniformly bit bounded, then there are only finitely many protocols
%$Q^{\vec{\pi}}$ in our correspondence, and therefore one of them must 
$k$-uniformly bit-bounded, then there are only finitely many such 
%ivan11
%protocols
polynomials
$Q^{\vec{\pi}}$, so one of them must 
have infinitely many different critical values, which contradicts
Theorem~\ref{thm:poly}. Before constructing $Q^{\vec{\pi}}$,
%joe8
%as an intermediate step
we construct a simpler polynomial $P^{\vec{\pi}}$
such that there exists a transformation $\gamma$ that maps
%joe8
%$P^{\vec{\pi}}$ to $Q^{\vec{\pi}}$ (i.e. such that $P^{\vec{\pi}}
$P^{\vec{\pi}}$ to $Q^{\vec{\pi}}$ (i.e., such that $P^{\vec{\pi}}
\circ \gamma = Q^{\vec{\pi}}$). We show that if $\vec{\pi}$ is a basic
$1$-resilient implementation of a $p$-biased coin such that all
messages of honest agents have fixed size $k$, then there exists a
point $\vec{x}$ such that $P^{\vec{\pi}}(\vec{x}) = p$ and another
linear map $\phi$ such that $\phi(\vec{x})$ is a critical point of $Q$
and $Q(\phi(\vec{x})) = p$. 

%joe8: ``goes as follows'' is not good English
%The construction of $P^{\vec{\pi}}$ goes as follows. Suppose that
The construction of $P^{\vec{\pi}}$ proceeds as follows. Suppose that
$\vec{\pi}$ is a basic 1-resilient implementation of a $p$-biased coin
for some $p \in [0,1]$ such that the number of bits sent by honest
agents is exactly $k$. 
%ivan8: added what was before, replacing pi^p for pi
Let $f^{\vec{\pi}}(s_1, \ldots, s_n)$ be the 
decision function 
of honest players, that is, the function that, given $s_1, \ldots, s_n
\in \{0,1\}^{k}$, computes whether the honest players output 0 or 1 as
a function 
of the histories $s_1, \ldots, s_n$ of all players. 
Note that 
if $n > 1$, 
since all honest players must compute the same output, their output has to be a deterministic function of $(s_1, \ldots, s_n)$, which means that $f^{\vec{\pi}}$ is deterministic.
Let
$x_s^i$ be the probability that player $i$ sends message $s$ at round
1 if $\vec{\pi}$ is run. Then the probability that
players output $1$ is
$$P^{\vec{\pi}}(\vec{x}) = \sum_{s_1, \ldots, s_n \in
  \{0,1\}^{k}} x_{s_1}^1 \ldots x_{s_n}^n f^{\vec{\pi}}(s_1, \ldots,
s_n),$$
where $\vec{x}$ is the vector whose components are the $2^kn$
probabilities $x_s^i$.

Note that $P^{\vec{\pi}}(\vec{x})$ is 
a multilinear
 polynomial in
$2^k n$ variables
with all 
%ivan10:
%or
of
its coefficients in $\{0,1\}$.
 If $\vec{\pi}^p$ is $1$-resilient
 and $n > 1$, 
 then
 $P^{\vec{\pi}}(\vec{x})$ must remain constant regardless 
%joe8
 % of how the round-1 message sent by the
  of how the message sent in the first round by the 
malicious player
is chosen.
Since the distribution of honest players' outputs 
cannot be affected by deviations of malicious players, even if
some malicious player 
decides to send its first message following another distribution in which
each message $s$ has probability $y_s^i$ to be sent (instead of $x_s^i$), it must
be the case that,  
 for each $i \in [n]$, $P^{\vec{\pi}}(\vec{x})$ must remain
constant when the values $\{x_s^i\}_{s \in \{0,1\}^k}$ are replaced by
certain other values $\{y_s^i\}_{s \in \{0,1\}^k}$ such 
that, for all $i \in [n]$, 
$\sum_{s \in
  \{0,1\}^k} y_s^i = 1$.
%joe8
%This motivates the following definition: 

%ivan8: out
\commentout{
%joe7
%We next show that,
We next show that
%ivan7:
%joe7
%if a subset $S \subseteq [0,1]$ is infinite,
if a subset $S \subseteq [0,1]$ is infinite, then
there does not exist a uniformly bit-bounded set
%joe3*: I thought you were going to do this for an arbitrary infinite
%set S, not just for [0,1]
%ivan7:
%$\{\vec{\pi}^p: p \in [0,1]\}$
$\{\vec{\pi}^p: p \in S\}$
 of joint protocols such that
$\vec{\pi}^p$ implements $p$-consensus and is $1$-resilient.
Suppose that 
%ivan7:
%$\{\vec{\pi}^p: p \in [0,1]\}$
$\{\vec{\pi}^p: p \in S\}$
 is
such a set.  Let $k$ be the bound on the number of
bits. Let $f^{\vec{\pi}^p}(s_1, \ldots, s_n)$ be the 
decision function 
of honest players, that is, the function that, given $s_1, \ldots, s_n
%joe3*: why is s_i in {0,1}^k.  This says that the message string has
%exactly k bits, rather than at most k bits.  Also, you need to say
%earlier that we assume that messaes are all in {9,1}*.
\in \{0,1\}^{k}$, computes whether the honest players output 0 or 1 as
a function 
of the histories $s_1, \ldots, s_n$ of all players. 
%joe3
%Note that,
Note that 
%ivan5:
if $n > 1$, 
%joe3*: I think that what you're trying to say here is that player i's
%output must be a deterministic function of s_i.  If so, you should say
%that.  Then you can argue the p-consensus implies that all the honest
%players must compute the same output. If that's not what you're
%saying 
%$f^{\vec{\pi}^p}$ is deterministic since all players must output the
%ivan7:
\commentout{
then $f^{\vec{\pi}^p}$ is deterministic, since all players must output the
same value, 
so
any non-trivial randomization in $f^{\vec{\pi}^p}$ would imply that
there is a non-zero chance that honest players output different
values. 
}
since all honest players must compute the same output, their output has to be a deterministic function of $(s_1, \ldots, s_n)$, which means that $f^{\vec{\pi}^p}$ is deterministic.
Let
$x_s^i$ be the probability that player $i$ sends message $s$ at round
%joe3* 
%1 if $\vec{\pi}^p$ is run. Then we can write the probability that
%players output $1$ given all of the $x_s^i$ values as follows:
%as:
1 if $\vec{\pi}^p$ is run. Then the probability that
players output $1$ is
$$P^{\vec{\pi}^p}(\vec{x}) = \sum_{s_1, \ldots, s_n \in
%joe2*: this seems to assume that f represents the probability that 0
%is output
  \{0,1\}^{k}} x_{s_1}^1 \ldots x_{s_n}^n f^{\vec{\pi}^p}(s_1, \ldots,
%joe3
%s_n)$$
s_n),$$
%joe3
where $\vec{x}$ is the vector whose components are the $2^kn$
probabilities $x_s^i$.

Note that $P^{\vec{\pi}^p}(\vec{x})$ is 
%ivan5: not true
%a square-free
a multilinear
 polynomial in
%joe2*: I guess it's obvious that it's square-free from its form,
%although this feels like it requires proof.  Do we need
%square-freeness anywhere?
%$2^k n$ variables. If $\vec{\pi}^p$ is $1$-resilient,
$2^k n$ variables
%ivan5: I don't know if we should specify more what we mean by 'coefficients'. In math this is  well defined.
with all or its coefficients in $\{0,1\}$.
%ivan5:$
% (i.e., there is no polynomial $q$ such that $q^2$ is
%a factor of $f$). 
 If $\vec{\pi}^p$ is $1$-resilient
 %ivan5: ill case
 and $n > 1$, 
 %joe3
 then
 $P^{\vec{\pi}^p}(\vec{x})$ must remain constant regardless 
%joe2
%of how the
%malicious player decides to choose its message at round 1. This means
of how the round-1 message sent by the 
malicious player
%ivan5:
is chosen.
%ivan7:
%   This means
%that,
%joe7
%Since the output distribution of honest players
%should not be affected by deviations of malicious players, even if
Since the distribution of honest players' outputs 
cannot be affected by deviations of malicious players, even if
some malicious player 
%joe7: Ivan, the next line doesn't parse
decides to its first message following another distribution in which
each message $s$ has probability $y_s^i$ instead of $x_s^i$, it must
%joe7
%hold that,
be the case that,  
 for each $i \in [n]$, $P^{\vec{\pi}^p}(\vec{x})$ must remain
%joe2
%constant when replacing the values $\{x_s^i\}_{s \in \{0,1\}^k}$ for
%some other values $\{y_s^i\}_{s \in \{0,1\}^k}$ such that $\sum_{s \in
constant when the values $\{x_s^i\}_{s \in \{0,1\}^k}$ are replaced by
certain other values $\{y_s^i\}_{s \in \{0,1\}^k}$ such 
%joe3
%that
that, for all $i \in [n]$, 
$\sum_{s \in
  \{0,1\}^k} y_s^i = 1$.
%joe3*: Ivan, You need to add some intuition here
%ivan7:
%(Intuitively, the values $y_s^u$ are the ones that arise when ...)
This motivates the following definitions: 
}

%joe7*: this still needs lots more intuition
%ivan5: replacing k by \ell since we are using k already and it might be confusing
\begin{definition}
%joe2: no need to say ``We say that'' in a definition.  Also, are
%these standard definitions?  If so, you should say that (and point
%toa  reference).  Also, you should say somewhere that \matbb{C}
%represents the complex numbers
%  Let $k$ and $n$ be two positive integers. We say that a point $\vec{x}
%\in \mathbb{C}^{kn}$ is $(k,n)$-probabilistic if $x_i \ge 0$ for all $i$
%ivan5:  
  %Given positive integers $k$ and $n$,  
  %$\vec{x} \in \mathbb{C}^{kn}$ is 
  Given positive integers $\ell$ and $n$,  
 $\vec{x} \in \mathbb{C}^{\ell n}$ is 
%ivan5:  
  %\emph{$(k,n)$-probabilistic}
  \emph{$(\ell,n)$-probabilistic}
   if,
  %joe2
%$x_i \ge 0$ for all $i$
%and, for all $i \in [n]$, $\sum_{j = 1}^{k} x_{(i-1)k + j} = 1$. 
  for all $i \in [n]$,
%ivan5: 
  %$x_i \ge 0$ and $\sum_{j = 1}^{k} x_{(i-1)k + j} = 1$. 
  $x_i \ge 0$ and $\sum_{j = 1}^{\ell} x_{(i-1)\ell + j} = 1$. 
\end{definition}

%ivan8: intuition+
%joe8
%Intuitively, a point $\vec{x}$ is $(\ell, n)$ probabilistic if it is
Intuitively, a point $\vec{x}$ is $(\ell, n)$-probabilistic if it is
the concatenation of $n$ sub-arrays of length $\ell$ that could be
%joe8
%probability distributions (i.e. that all values are positive and that
probability distributions (i.e., all $\ell$ values are positive and 
their sum is 1). Note that these sub-arrays are actually $(\ell,
1)$-probabilistic points. Our previous argument shows that if
$\vec{\pi}$ is a $1$-resilient basic implementation of a $p$-biased
%joe8
%coin such that the messages of honest agents have exactly $k$ bits,
%then there exists a $(2^k, n)$ probabilistic point $\vec{x}$ such that
common coin in which the messages of honest agents have exactly $k$ bits,
then there exists a $(2^k, n)$-probabilistic point $\vec{x}$ such that
$P^{\vec{\pi}}(\vec{x})$ remains constant even if we replace one of
%joe8
%its sub-arrays of length $2^k$ by any other $(2^k, 1)$-probabilistic
%point. This is formalized as follows:
its sub-arrays of length $2^k$ by another $(2^k, 1)$-probabilistic
point. This observation motivates the following definition:

\begin{definition}\label{def:stable}
%joe2
%  Fix $k$ and $n$, and let $P$ be a polynomial in $kn$ variables. We say
  %that a point $\vec{x}$ is $(p,k,n)$-stable in $P$ if:
  %ivan5
  \commentout{
    Given positive integers $k$ and $n$ and a polynomial $P$ in $kn$
    variables, a point $\vec{x}$ is \emph{$(p,k,n)$-stable in $P$} if
  \begin{itemize}
%joe2
%\item [(a)] $\vec{x}$ is $(k,n)$-probabilistic.
  %\item [(b)] Let $\vec{z}_i = (x_{(i-1)k + 1}, \ldots, x_{ik})$. For
  \item [(a)] $\vec{x}$ is $(k,n)$-probabilistic;
\item [(b)] for
  all $i \in [n]$ and all $(k,1)$-probabilistic points
  $\vec{y}$,
%joe2
if $\vec{z}_i = (x_{(i-1)k + 1}, \ldots, x_{ik})$, then
  $$P(\vec{z}_1, \vec{z}_2, \ldots, \vec{z}_{i-1}, \vec{y},
  \vec{z}_{i+1}, \ldots, \vec{z}_n) = p.$$ 
\end{itemize}
}
  Given positive integers $\ell$ and $n$ and a polynomial $P$ in $\ell n$
%joe3*: please make it ``stable point of P'' everywere, so that it's
%parallel to ``critical point of Q''
%  variables, a point $\vec{x}$ is \emph{$(p,\ell,n)$-stable in $P$} if
  variables, a point $\vec{x}$ is 
%ivan7
a  
  \emph{$(p,\ell,n)$-stable point of $P$} if
  \begin{itemize}
%joe2
%\item [(a)] $\vec{x}$ is $(k,n)$-probabilistic.
  %\item [(b)] Let $\vec{z}_i = (x_{(i-1)k + 1}, \ldots, x_{ik})$. For
  \item [(a)] $\vec{x}$ is $(\ell,n)$-probabilistic;
\item [(b)] for
  all $i \in [n]$ and all $(\ell,1)$-probabilistic points
  $\vec{y}$,
%joe2
if $\vec{z}_i = (x_{(i-1)\ell + 1}, \ldots, x_{i\ell})$, then
  $$P(\vec{z}_1, \vec{z}_2, \ldots, \vec{z}_{i-1}, \vec{y},
  \vec{z}_{i+1}, \ldots, \vec{z}_n) = p.$$ 
\end{itemize}
\end{definition}

%ivan8: given before
\commentout{
%joe3*: Ivan, you need to give *much* more intuition here, both for
%the definitions, and in the ``argument'' (there was no ``previous
%argument'', just a claim that a property held).
The previous argument can be summarized in the following statement:
}

\begin{lemma}\label{lemma:stable}
%joe2*: we don't talk about a single protocol being uniformly
%bit-bounded, but a family of joint protocols being uniformly
%bit-bounded.  I don't think we need to say bit-bounded, since you're
%already saying that players send at most k bits
%  If $\vec{\pi}^p$ is a $1$-resilient uniformly bit-bounded basic
  %protocol that implements $p$-consensus in which players send at most
    If $\vec{\pi}^p$ is a $1$-resilient basic
protocol that implements $p$-consensus in which players send 
%ivan8:
%at most
exactly
$k$ bits
%ivan8:
in their first message, 
 there exists a
%ivan8:
multilinear polynomial $P^{\vec{\pi}^p}$ in $2^kn$ variables 
with coefficients in $\{0,1\}$ and a point $\vec{x}$ such that $\vec{x}$ is a
%ivan7:
%transformation
%ivan5:
%$(1-p, 2^k, n)$-stable
$(p, 2^k, n)$-stable
 point in
$P^{\vec{\pi}^p}$. 
\end{lemma}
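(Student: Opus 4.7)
The plan is to take the polynomial $P^{\vec{\pi}^p}$ already written down in the paragraph preceding the lemma and show that, with the natural choice of $\vec{x}$, it has the required properties. Concretely, I would first set
\[
P^{\vec{\pi}^p}(\vec{x}) \;=\; \sum_{s_1,\ldots,s_n \in \{0,1\}^k} x_{s_1}^1 \cdots x_{s_n}^n \, f^{\vec{\pi}^p}(s_1,\ldots,s_n),
\]
where $f^{\vec{\pi}^p}\colon(\{0,1\}^k)^n\to\{0,1\}$ is the (deterministic, since all honest players must agree) decision function of honest players in the basic protocol. Multilinearity is immediate, since each monomial contains at most one variable from each block $\{x_s^i\}_{s\in\{0,1\}^k}$; and the coefficients, being values of $f^{\vec{\pi}^p}$, lie in $\{0,1\}$.

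Next I would define the candidate stable point $\vec{x}^*$ by letting $x_s^{i,*}$ be the probability that honest player $i$, executing $\pi_i$, broadcasts the $k$-bit message $s$ in round~1. Because $\pi_i$ outputs exactly one message in $\{0,1\}^k$, the coordinates $(x_s^{i,*})_{s\in\{0,1\}^k}$ are non-negative and sum to $1$ for each $i$, so $\vec{x}^*$ is $(2^k,n)$-probabilistic. I would then verify condition~(b) of Definition~\ref{def:stable}: fix $i \in [n]$ and any $(2^k,1)$-probabilistic $\vec{y}$, and consider the adversary $A_{i,\vec{y}}$ that corrupts only player $i$ and instructs it to broadcast $s$ in round~1 with probability $y_s$ (its behavior in round~2 is irrelevant). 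Since the honest players' common output is $f^{\vec{\pi}^p}$ applied to the round-1 broadcasts, and since the broadcasts of the $n-1$ honest players are mutually independent and follow $\vec{x}^*$, the probability that the honest players all output $1$ under $A_{i,\vec{y}}$ is exactly $P^{\vec{\pi}^p}(\vec{z}_1^*,\ldots,\vec{z}_{i-1}^*,\vec{y},\vec{z}_{i+1}^*,\ldots,\vec{z}_n^*)$. By $1$-resilience this probability equals $p$, which is precisely condition~(b).

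The main point that requires care (and is the only real obstacle) is to justify that, in a basic protocol, every possible deviation by a single malicious player $i$ is equivalent — from the honest players' perspective — to replacing the distribution of its round-1 broadcast. This uses three facts already established in the excerpt: the broadcast channel forces all honest players to see the same round-1 message from $i$; by Proposition~\ref{prop:implementation}(b) any ill-formatted broadcast is interpreted as a fixed default message, so $i$'s deviation is captured by a distribution over $\{0,1\}^k$; and honest players' outputs depend only on the round-1 broadcasts through $f^{\vec{\pi}^p}$, so any round-2 behavior of $i$ (or any point-to-point messages it sends) is irrelevant. Once these three observations are in place, the stability property is just a rewriting of the $1$-resilience hypothesis, and the lemma follows.
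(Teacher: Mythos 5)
Your proposal is correct and matches the paper's own argument, which is given in the text immediately preceding the lemma: the same polynomial $P^{\vec{\pi}^p}$ built from the decision function $f^{\vec{\pi}^p}$, the same choice of $\vec{x}$ as the vector of round-1 broadcast probabilities, and the same appeal to $1$-resilience against an adversary that corrupts one player and resamples its round-1 message according to $\vec{y}$. Your added care about ill-formatted messages and round-2 behavior is handled in the paper by the construction in Proposition~\ref{prop:implementation}, so nothing further is needed.
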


%joe8
%We now show how to construct $Q^{\vec{\pi}}$. More precisely, we show
%the following:
The next proposition now shows how to construct the desired polynomial
$Q^{\vec{\pi}}$.  

%ivan8: I only included one of the transforms, the other one is
%implicit in the construction of Q. 
\begin{proposition}\label{prop:linear-maps}
Given a polynomial $P \in \mathbb{C}[X]$ in $\ell n$ variables, there
exist a polynomial $Q$ in $(\ell - 1)n$ variables and a linear
transform 
%ivan10:
%$\gamma^\ell_n$ 
$\phi^\ell_n$ 
such that if $\vec{x}$ is a $(p, \ell,
n)$-stable point of $P$, then $\phi^\ell_n(\vec{x})$ is a critical
point of $Q$ such that $Q(\phi^\ell_n(\vec{x})) = p$. 
\end{proposition}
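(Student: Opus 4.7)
The plan is to eliminate the affine constraint $\sum_{j=1}^\ell x^i_j = 1$ in each block by solving for the last coordinate. Concretely, I would define $\phi^\ell_n : \mathbb{C}^{\ell n} \to \mathbb{C}^{(\ell-1)n}$ to be the linear projection that drops the last coordinate of each block of length $\ell$, and define an auxiliary affine map $\psi : \mathbb{C}^{(\ell-1)n} \to \mathbb{C}^{\ell n}$ that reinserts the value $1 - \sum_{j=1}^{\ell-1} y^i_j$ after the first $\ell-1$ coordinates of each block. Then I set $Q := P \circ \psi$, which is a polynomial in $(\ell-1)n$ variables. Note that $\psi \circ \phi^\ell_n$ is the identity on the set of $(\ell,n)$-probabilistic points, and $\phi^\ell_n \circ \psi$ is the identity on all of $\mathbb{C}^{(\ell-1)n}$.

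Now suppose $\vec{x}$ is a $(p,\ell,n)$-stable point of $P$. Because $\vec{x}$ is $(\ell,n)$-probabilistic, $\psi(\phi^\ell_n(\vec{x})) = \vec{x}$, so $Q(\phi^\ell_n(\vec{x})) = P(\vec{x})$, and stability (applied with any fixed $i$ and $\vec{y} = \vec{z}_i$) gives $P(\vec{x}) = p$. Thus the value equation $Q(\phi^\ell_n(\vec{x})) = p$ is essentially free.

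The core of the argument is the critical-point property. Fix $i \in [n]$ and freeze every block $k \neq i$ of $Q$ at the values dictated by $\phi^\ell_n(\vec{x})$; this yields a polynomial $R_i(y^i_1, \ldots, y^i_{\ell-1})$ in $\ell - 1$ variables. Stability of $\vec{x}$ says that for every $(\ell,1)$-probabilistic $(w_1, \ldots, w_\ell)$, substituting it into block $i$ of $P$ yields $p$; translating through $\psi$, this is exactly the assertion that $R_i(y^i_1, \ldots, y^i_{\ell-1}) = p$ for every tuple with $y^i_j \ge 0$ and $\sum_{j=1}^{\ell-1} y^i_j \le 1$. This constraint set has non-empty interior in $\mathbb{R}^{\ell-1}$, so the polynomial $R_i - p$ vanishes on a Zariski-dense set and is therefore identically zero. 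Consequently every partial derivative $\partial Q / \partial y^i_j$ vanishes at $\phi^\ell_n(\vec{x})$. Since $i$ was arbitrary, this holds for all variables, and $\phi^\ell_n(\vec{x})$ is a critical point of $Q$ with $Q(\phi^\ell_n(\vec{x})) = p$.

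The conceptual hurdle is the translation from the simplex-based stability condition of Definition~\ref{def:stable} (a codimension-$1$ affine condition coupled with non-negativity) to the open-set condition needed for the polynomial identity $R_i \equiv p$. The substitution $x^i_\ell = 1 - \sum_{j<\ell} x^i_j$ is precisely what converts a constraint that would otherwise confine us to a lower-dimensional affine slice into a constraint over a full-dimensional region of $\mathbb{C}^{\ell - 1}$, after which standard polynomial algebra concludes. Everything else is bookkeeping: checking that $\phi^\ell_n$ is linear (it is a coordinate projection) and that $Q$ is a genuine polynomial (it is the composition of $P$ with an affine map).
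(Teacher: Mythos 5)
Your proposal is correct and follows essentially the same route as the paper: your $\psi$ is the paper's $\gamma^\ell_n$, your $Q = P\circ\psi$ and projection $\phi^\ell_n$ match the paper's construction, and your argument that the frozen-block polynomial $R_i$ equals $p$ on a full-dimensional simplex (hence identically) is exactly the paper's treatment of $Q_i^{\vec{x}}$. No gaps.
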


%ivan8: out
\commentout{
%joe2
%The proof of Proposition~\ref{proposition:main} goes as follows:
%joe3
%The proof of Theorem~\ref{proposition:main} proceeds as follows:
The proof of Theorem~\ref{proposition:main} now proceeds as follows:
Suppose there exists a $1$-resilient uniformly bit-bounded family
$\vec{\pi}^p$ of 
protocols such that $\vec{\pi}^p$ implements $p$-consensus for $p \in
[0,1]$. Let $k$ be the uniform bound on the number of bits. Then, by
Lemma~\ref{lemma:stable}, for each $p \in [0,1]$ there exists a 
%ivan5:
%$(1-p,
%2^k, n)$-stable
$(p, 2^k, n)$-stable 
 point in $P^{\vec{\pi}^p}$. Since there are a finite
%joe2*: again, do you need square-free?  This seems to be true without
%that assumption
number of
%ivan5:
%square-free
multilinear 
  polynomials in $2^kn$ variables
%ivan5:
with coefficients in $\{0,1\}$,  
there must
exist 
such a polynomial 
with infinitely many $(p, 2^k, n)$-stable points.
This contradicts
the following lemma: 

\begin{lemma}\label{lemma:stable2}
%ivan5:
\commentout{
Let $k$ and $n$ be two positive integers and let $P$ be a polynomial in $kn$ variables, then there exist at most finitely many $(p,k,n)$-stable points in $P$.
}
Let $\ell$ and $n$ be two positive integers and let $P$ be a polynomial in $\ell n$ variables, then there exist at most finitely many $(p,\ell,n)$-stable points in $P$.
\end{lemma}

%ivan4: proven in the appendix?
%joe2
%To prove Lemma~\ref{lemma:stable2}, we need the following Theorem,
%which will be proven later:
%ivan7**: I had to stop here! :( I'll continue tomorrow.
%joe3*: Ivan, you need to restructure the story here.  First, you need to
%say that we're going to use a standard result from algebraic geometry
%and give a reference (it's fine if online).  Define critical point
%separately, state and explain intuitively what the theorem is saying.  Then
%you can say that the key point in the proof will be to show that if
%there is a uniformly bounded family of p-consensus protocols, then we
%can construct a polynomial Q with infinitely many critical points.
%This should go at the very beginining of the section.  Then you can
%continue as above until you get to Lemma 1.  After Lemma 1, you
%should prove the following
%Lemma 2:  If P is a polynomial in ln
%variables, then there exists linear transformations \phi and \gamma
%such that if x is a (p,l,n) stable point of P, then \phi(x) is a
%critical point of Q =- P \circ \gamma, and Q(\phi(x)) = p.
%Then you can prove Theorem 1 using Lemmas 1 and 2.  I think that this
%is a better structure.
%x is a (p,l,n) stable point 

%joe7: added period, although all of this needs to be reorganized
%To prove Lemma~\ref{lemma:stable2}, we need the following result
To prove Lemma~\ref{lemma:stable2}, we need the following result.
%ivan5*: need citation, we can't say that it is proven anywhere. We can only say that it appears as an exercise...
%joe3: use the one from the web and cite the exercise.  You can call
%it a well-known result

\begin{theorem}\label{thm:poly}
%joe2: you need to remind the reader what \nabla P is.
%  Let $P$ be a polynomial and let $S$ be the set of critical points of
%$P$ (i.e. points $s$ such that $\nabla P(s) = 0$). Then $P(S)$ is a
%is a finite set. 
%joe5
% If $S$ is the set of critical points of polnomial $Q$ (i.e. points
%joe3: you shouldn't use S here, since you've used S in Theorem 1 for
%a different purpose.
  If $S$ is the set of \emph{critical points} of polynomial $Q$ (i.e.,
the     points 
$s$ such that  
all partial derivatives of $Q$ are 0 at $s$),
 then
$Q(S)$ (i.e., $\{Q(s): s \in S\}$) 
is finite.
\end{theorem}
}

%ivan8:
%\begin{proof} [Proof of Lemma~\ref{lemma:stable2}]
\begin{proof}
  Intuitively, a $(\ell,1)$-probabilistic point $\vec{x}$ can be
  parametrized by its first $\ell-1$ components (since $x_\ell = 1 - \sum_{i =
    1}^{\ell-1} x_i$); similarly, a $(\ell,n)$-probabilistic point
can be parametrized by 
$(\ell-1)n$ of its
components
using a function $\gamma_n^\ell$.
 The idea of the proof is to 
%ivan8: 
%define a polynomial $Q$ that
define $Q$ as the polynomial that
acts on each $\vec{x} \in \mathbb{C}^{(\ell-1)n}$ in the same way that
$P$ would act on $\gamma_n^\ell(\vec{x})$, and check that
$(p,\ell,n)$-stable points in $P$ correspond to critical points 
%joe3: why use a new word (image) when you can explain it simply.
%in $Q$ with image $p$.
$y$ of $Q$ such that $Q(y) = p$.
%ivan8:

%joe3
%Suppose $P$ has an infinite number of $(p,\ell,n)$-stable points.
Suppose that $P$ has an infinite number of $(p,\ell,n)$-stable points.
 Let
%joe3*: What's C?  The complex numbers?  Our domains are real, not
%complex?  Should this be R
 $\gamma^\ell: \mathbb{C}^{\ell-1} \rightarrow \mathbb{C}^{\ell}$ be the linear
transformation defined by $\gamma^\ell(x_1, \ldots, x_{\ell-1}) = (x_1, \ldots,
x_{\ell-1}, 1 - \sum_{i = 1}^{\ell-1} x_i)$, and let $\gamma^\ell_n :
\mathbb{C}^{(\ell-1)n} \rightarrow \mathbb{C}^{\ell n}$ be the transformation
defined by $\gamma^\ell_n (x_1, \ldots, x_{(\ell-1)n}) =
(\gamma^\ell(\vec{z}_1), \ldots, \gamma^\ell(\vec{z}_n))$, where
$\vec{z}_i :=
(x_{(i-1)(\ell-1) + 1},\allowbreak \ldots, x_{i(\ell-1)})$. In addition, let $\phi^\ell:
\mathbb{C}^\ell \rightarrow \mathbb{C}^{\ell-1}$ be the transformation defined by
$\phi^\ell(x_1, \ldots, x_\ell) = (x_1, \ldots, x_{\ell-1})$, and $\phi_n^\ell:
\mathbb{C}^{\ell n} \rightarrow \mathbb{C}^{(\ell-1)n}$ be defined as
$\phi_n^\ell(x_1, \ldots, x_{\ell n}) = (\phi^\ell(\vec{z}_1'), \ldots, \phi^\ell(\vec{z}_n'))$,
where $\vec{z}_i' = (x_{(i-1)\ell + 1}, \ldots, x_{i\ell})$.
Intuitively, $\phi_n^\ell$ is the right inverse of $\gamma_n^\ell$ on
$(\ell,n)$-probabilistic points (i.e. $(\gamma_n^\ell \circ \phi_n^\ell)(\vec{x})
= \vec{x}$ for all $(\ell,n)$-probabilistic points $\vec{x}$).  

Consider the polynomial $Q = P \circ \gamma_n^\ell$ in $(\ell-1)n$
variables.
%joe8*: Ivan, the next two sentences don't belong here.  Just prove
%the theorem; no point in getting the contradiction yet.
%We show below that if $\vec{x}$ is a $(p,\ell,n)$-stable point
%in $P$, then $\phi_n^\ell(\vec{x})$ is a critical point of $Q$ and
%$Q(\phi_n^\ell(\vec{x})) = p$.
%Since $S$ (the set of probabilities $p$ for which we want a 
%%joe8
%%$p$-consensus protocol) is infninite, this contradicts
%$p$-consensus protocol) is infinite, this contradicts 
%Theorem~\ref{thm:poly}, since it shows that
%$Q$ has infinitely many critical points.
Note that $(\gamma_n^\ell \circ \phi_n^\ell) (\vec{x}) = \vec{x}$ for all
$(\ell,n)$-probabilistic points $\vec{x}$, which means that if $\vec{x}$ is
$(p,\ell,n)$-stable in $P$, then 
$Q(\phi_n^\ell(\vec{x})) = P(\vec{x}) = p$.
It
remains to show that $\phi_n^\ell(\vec{x})$ is a critical point in $Q$. 

To see this, let $\vec{z}_i = (x_{(i-1)\ell + 1}, \ldots, x_{i\ell})$, and
consider the polynomial $$Q_i^{\vec{x}}(\vec{y}) :=
Q(\phi_n^\ell(\vec{z}_1), \ldots, \phi_n^\ell(\vec{z}_{i-1}), \vec{y},
\phi_n^\ell(\vec{z}_{i+1}), \ldots, \phi_n^\ell(\vec{z}_n)).$$ 
Let $\Delta^{k-1}$ be the standard simplex in $\mathbb{R}^{\ell-1}$ defined by the equations $y_j \ge 0$ for all $j$ and $\sum_{j = 1}^{\ell-1} y_j \le 1$. Note that if $\vec{y} \in \Delta^{\ell-1}$,
 then $\gamma^\ell(\vec{y})$ is $(\ell,1)$-probabilistic and therefore,
by part (b) of Definition~\ref{def:stable}, we have that
 $P(\vec{z}_1, \ldots, \vec{z}_{i-1}, \gamma^\ell(\vec{y}),
 \vec{z}_{i+1}, \ldots, \vec{z}_n) = p$.  
 This implies that $Q_i^{\vec{x}}(\vec{y}) = p$ for all $\vec{y} \in \Delta^{\ell-1}$, which means that $Q_i^{\vec{x}}$ is constant. In particular, by taking $\vec{y} = \phi^\ell(\vec{z}_i)$,
  we have that $\frac{\partial Q_i^{\vec{x}}}{\partial y_j}(\phi^\ell(\vec{z}_i)) = 0$ for all $j \in [\ell-1]$, which implies that $\frac{\partial Q}{\partial y_{i(\ell-1) + j}} (\phi_n^\ell(\vec{x})) = 0$. 
    Since this argument applies to all $i \in [n]$ and all $j
  \in [\ell-1]$, it follows that $\phi_n^\ell(\vec{x})$ is a critical point
  of $Q$. 
\end{proof}

%ivan8:
Now we can prove Theorem~\ref{proposition:main}.

\begin{proof}[Proof of Theorem~\ref{proposition:main}]
Suppose that $S$ is an infinite subset of $[0,1]$ and suppose there
exists a family $\{\vec{\pi}^p : p \in S\}$ of $k$-uniformly
bit-bounded joint protocols such that each $\vec{\pi}^p$ is a
%joe8
%$1$-resilient implementation of a $p$-biased coin.
%Then, by Lemma~\ref{lemma:stable} there exists a family
$1$-resilient implementation of a common $p$-biased coin.
Then by Lemma~\ref{lemma:stable}, there exists a family
$\{P^{\vec{\pi}^p} : p 
\in S\}$ of multilinear polynomials in $2^kn$ variables with
coefficients in $\{0,1\}$ such that, for each $p \in S$, there exists
a point $\vec{x}^p$ such that $\vec{x}^p$ is a $(p, 2^k, n)$-stable
point of $P^{\vec{\pi}^p}$. Since there are only finitely many
multilinear polynomials in $2^kn$ variables with binary coefficients,
there must exist $q \in S$ such that  $P^{\vec{\pi}^q}$ has infinitely
many $(p, 2^k, n)$-stable points. By
Proposition~\ref{prop:linear-maps}, there exists a polynomial $Q$ such
that each $(p, 2^k, n)$-stable point in $P^{\vec{\pi}^q}$ can be
mapped to a critical points in $Q$ with image $p$. This implies that
the polynomial $Q$ provided by Proposition~\ref{prop:linear-maps} has
infinitely many different critical values, which contradicts
Theorem~\ref{thm:poly}. 
\end{proof}

%ivan4:out
\commentout{
%ivan1:
%joe1*: I don't understand the next sentence at all.  What does it
%mean to model something as a polynomial?   I think that you're just
%trying to say that any function from {0,1}^k -> {0,1} is equivalent
%to a sequare-free polynomial.  If that's right, then just say that.
In a basic protocol, all ways that
honest players can choose their output can be modeled as a square-free polynomial on $k$ variables in $\mathbb{F}_2$, in which each variable is each of the bits sent by each other player. In fact, given the output $f(s)$ of their decision function for each $s \in \{0,1\}^k$, the following polynomial produces the same outputs: $$P^f(x_1, \ldots, x_k) := \sum_{s \in \{0,1\}^k} f(s) \left(\prod_{i = 1}^n x_i^{s_i}(1-x_i)^{1 - s_i}\right)$$

To see this, define $Q^s(x_1, \ldots, x_k) = \prod_{i = 1}^n x_i^{s_i}(1-x_i)^{1 - s_i}$. It is easy to check that $Q^s(s) = 1$ and $Q^s(s') = 0$ for all $s \not = s'$. Thus, $P(s) = f(s)Q(s) = f(s)$ as desired. Note that viewing $x_i$ as the probability that the $i$th bit is 1, $Q^s(x)$ computes the probability that $x = s$. Thus, viewing $P^f$ as a polynomial in $\mathbb{R}$, $P^f(x)$ computes the probability that the agents output 1 given the probabilities $x_1, \ldots, x_k$ that each of the bits $b_i$ sent is 1.

%joe1*: I'm getting lost again.  Above, the inputs were in {0,1}^k.
%Now it seems that the inputs are probabilities.
This shows that fixing a basic protocol $\vec{\pi}$, the probability
that agents output 1 is a square-free polynomial function $P(x)$ of
the probabilities $x_i$ that each of the bits sent is set to
1. Moreover, if honest players send a total of $N$ bits in
$\vec{\pi}$, $P(x)$ has $N$ variables. It can be checked easily that
if $\vec{\pi}$ is $1$-resilient, the probability that honest players
output 1 must remain constant even if a malicious player sends a bit
with a different probability. Thus, it must hold that $P(x_{-i}, y_i)
= P(x)$ for all $i$ and $y_i$. 

%joe1
%Following this intuition we present the following definition:

\begin{definition}
%joe1
%Given a polynomial $P$, a point $a$ is $p$-\emph{stable} in $P$ iff
%$P(a_{-i}, x_i) = p$ for all $i$ and $x_i$. We also say that $P$ is
%$p$-stable if there exists a $p$-stable point in $P$. 
  Given a polynomial $P$, a point $a$ is $p$-\emph{stable for $P$} if
  $P(a_{-i}, x_i) = p$ for all $i$ and $x_i$. A polynomial $P$ is
$p$-stable if there exists a $p$-stable point for $P$. 
\end{definition}

If there exists a family of protocols $\vec{\pi}^p$ that implements
$p$-consensus for all $p \in [0,1]$ and players send at most $N$ bits
%joe1*: the next statement requires a proof!  You also need to define
%critical point.
in total, there must exist a polynomial $P$ that is $p$-stable for an
infinite number of elements in $[0,1]$. However, since $p$-stable
points in $P$ are critical points, this contradicts the following
theorem: 
}

%ivan5: out
\commentout{

\section{Proof of Theorem~\ref{thm:poly}}

%joe1
%The proof of this theorem involves some Algebraic Geometry. Given a
The proof of this theorem involves algebraic geometry. Given a
%joe2*: don't use p for a polynomial, since above you used p to denote
%a probability (and were using P and Q to denote polynomials).  Also,
%yo uneed to define ideal and prime ideal, and point out that the
%partial derivatives of a polynomial form an ideal (although I don't
%see why they do, since multiplying a partial derivative by a constant
%does not give a partial derivative).
polynomial $p \in \mathbb{C}[x_1, \ldots, x_k]$, let $I_p$ be the 
%joe1: remind the reader what an ideal is; show that I_p is in fact an ideal.
ideal generated by all the partial derivatives of $p$. Denote by
$V(I)$ the zero-set of $I_p$, which is the set of elements $x \in
\mathbb{C}^k$ such that $q(x) = 0$ for all $q \in I$, and by
%joe1: is this standard notation?  You must also explain why it is an ideal.
%$\sqrt{I}$ the radical of $I$, which is the ideal consisting of all 
%joe2*: why is the radical an ideal?  In  particular, if q_1 and q_2
%are in the radical, why is q_1 + q_2 in the radical?
$\sqrt{I}$ the \emph{radical} of $I$, which is the ideal consisting of all
polynomials $q \in \mathbb{C}[x_1, \ldots, x_k]$ such that $q^n \in I$
for some $n$. 

The tools needed for the first part of the proof of Theorem~\ref{thm:poly} are the following:

\begin{proposition}\label{prop:aux}
For all ideals $I \subseteq \mathbb{C}[x_1, \ldots, x_k]$, $V(\sqrt{I}) = V(I)$
\end{proposition}

\begin{proof}
%joe1
  %  Suppose $x \in \mathbb{C}^k$ satisfies $q(x) = 0$ for all $q \in
%I$. Let $r$ be an element such that $r^n \in I$ for some $n$, then it
%must also satisfy $r^n(x) = 0$ and therefore $r(x) = 0$. 
  Suppose that $q(x) = 0$ for all $q \in
I$. If $r^n \in I$ for some $n$, then 
$r^n(x) = 0$, and thus $r(x) = 0$. 
\end{proof}

\begin{theorem}\label{thm:aux1}
%joe1: you've defined the radical of an ideal, not a radical ideal.
%You also need to define prime ideal
  If $I \subseteq \mathbb{C}[x_1, \ldots, x_k]$ is a radical ideal,
there exist finitely many prime ideals $P_1, \ldots, P_n$ such that $I
= \bigcap_{i = 1}^n P_i$. 
\end{theorem}

\begin{proof}
Missing citation
\end{proof}

%joe1*: all these words have to be explained!
%joe2*: I will repeat: all these words need to be explained (I don't
%know what they mean).
\begin{theorem}\label{thm:aux2}
If $P \subseteq \mathbb{C}[x_1, \ldots, x_k]$ is a prime ideal, $V(P)$
is connected in the analytic topology of $\mathbb{C}^k$.  
\end{theorem}

\begin{proof}
Missing citation
\end{proof}

Since $V(\bigcap_{i = 1}^n P_i) = \bigcup_{i = 1}^n V(P_i)$, by Theorem~\ref{thm:aux1} we have that, given a polynomial $p$, there exist finitely many prime ideals $P_i$ such that $V(\sqrt{I_p}) = \bigcup_{i = 1}^n V(P_i)$. By Proposition~\ref{prop:aux} we have that $V(I_p) = \bigcup_{i = 1}^n V(P_i)$, and by Theorem~\ref{thm:aux2} each subset $V(P_i) \subseteq \mathbb{C}^k$ is connected. We show next that $p(x) = p(y)$ for all $x,y \in V(P_i)$.

This follows from the following two propositions:

%joe1*: I have no idea what these words mean.  Moving up a level, I
%think you're going into far too much detail here.  There's no need to
%sketch the proof.  Just give a reference and write perhaps one
%paragraph of intuition, if you can.
%joe2*: yet again, ``irreducible variety'' needs to be explained, as
%does ``Zariski topology'' and ``Zariski open dense set''.
\begin{proposition}\label{prop:aux3}
Let $X$ be an irreducible variety of $\mathbb{C}^k$. Then there exists
a Zariski open dense set $U \subseteq X$ such that $U$ is smooth. 
\end{proposition}

\begin{proof}
Missing citation
\end{proof}

Note that such set $U$ is a manifold, since it is defined by the zeros of polynomial equations, and it would be an irreducible smooth manifold because it is contained in $X$. Therefore, it would be connected by Theorem~\ref{thm:aux2}.

\begin{proposition}\label{prop:aux4}
If an open set is dense in the Zariski topology, it is also dense in the analytic topology of $\mathbb{C}^n$.
\end{proposition}

\begin{proof}
Missing citation
\end{proof}

Since $P_i$ is a prime ideal, $V(P_i)$ is irreducible, and therefore by Proposition~\ref{prop:aux3} there exists a Zariski open dense set $U \subseteq X$ such that $U$ is smooth. Since $U$ is a connected open smooth manifold in $X$, it is also path connected, and therefore by the Mean Value Theorem, $p$ is constant on $U$ (recall that all points of $U$ are critical points of $p$ by construction). By Proposition~\ref{prop:aux4}, $U$ is dense in $V(P_i)$ in the analytic topology of $\mathbb{C}^n$, and therefore by continuity we have that $p$ is constant in $V(P_i)$ as desired.
}

%ivan5:
%joe8: no more open problem ...
%\section{Conclusion and Open Problems}
\section{Conclusion}

We have shown that there is no uniform bound on the number of messages
%joe7
%required to agree on random coin tosses. This implies that certain
required to achieve fault-tolerant $p$-consensus, even if we want to
tolerate only one failure. This implies that certain
%joe7
%kind of protocols that require coordinated randomization
kind of protocols, like weighted leader election, that require
coordinated randomization with probability $p$ for a potentially
infinite set of values $p$ 
%that require coordinated randomization 
%cannot be implemented with a uniform bound either.
must have unbounded communication complexity.
%joe7
%ivan8: I'm dumb, the simple protocol that we give at the beginning
%only uses 2 messages in expectation... 
\commentout{
Our result considers only worst-case communication complexity.  
We
believe that it can be extended to expected communication complexity.
%We expect this result to
%still hold true even if we consider the expected number of messages
%instead of the worst-case number of messages: 

\begin{conjecture}
If $n > 1$, for every infinite set $S \subseteq [0,1]$ and all $N \in \mathbb{N}$, there is no family $\vec{\pi}^p$ of
joint protocols  for $n$ agents such that $\vec{\pi}^p$ 
is a $1$-resilient implementation of $p$-consensus for $p \in S$ and each honest agent $i$ in each protocol $\vec{\pi}^p$ sends less than $N$ messages in expectation.
\end{conjecture}
}

%ivan8: 
We also showed that the cause of the message complexity is the
%joe8
%requirement of fault-tolerance. In fact, there are both protocols that
requirement of fault-tolerance combined with the requirement that the
protocol be uniformly bounded in all executions. In fact, there are
both protocols that 
implement arbitrary $p$-biased coins but are not fault tolerant, and
protocols that implement arbitrary $p$-biased coins while tolerating
any number of malicious players, but in which the number of messages
%joe9
%sent by honest players in the worst-case scenario is arbitrarily
sent by each honest player in the worst-case scenario is arbitrarily
%joe8
%high.
high, although it is only 2 in expectation. 

%joe7
%joe8*: You don't want to leave solving this conjecture to future
%work!  instead, you want to show that it follows from what we did!
%We leave solving this conjecture to future work.
%joe1
%\bibliography{joe,game1}
%ivan11:
%\bibliography{z,joe}
\bibliographystyle{plain}
\bibliography{z,joe}

\end{document}